\documentclass[pra,twocolumn, preprintnumbers,amsmath,amssymb, nofootinbib]{revtex4}

\usepackage{graphicx, color, graphpap}
\usepackage{amsmath}
\usepackage{enumitem}
\usepackage{amssymb}
\usepackage{amsthm}
\usepackage{pstricks}
\usepackage{float}
\usepackage{multirow}
\usepackage{hyperref}
\usepackage{overpic}
\usepackage[T1]{fontenc}
\long\def\ca#1\cb{} 



\newcommand{\ket}[1]{|#1\rangle}               
\newcommand{\bra}[1]{\langle #1|}              
\newcommand{\dya}[1]{\ket{#1}\!\bra{#1}}
\newcommand{\dyad}[2]{\ket{#1}\!\bra{#2}}        
\newcommand{\ip}[2]{\langle #1|#2\rangle}      

\newcommand{\CC}{\mathcal{C}}
\newcommand{\DC}{\mathcal{D}}
\newcommand{\EC}{\mathcal{E}}
\newcommand{\FC}{\mathcal{F}}
\newcommand{\GC}{\mathcal{G}}
\newcommand{\HC}{\mathcal{H}}
\newcommand{\IC}{\mathcal{I}}

\newcommand{\LC}{\mathcal{L}}
\newcommand{\MC}{\mathcal{M}}

\newcommand{\PC}{\mathcal{P}}

\newcommand{\SC}{\mathcal{S}}

\newcommand{\ZC}{\mathcal{Z}}

\newcommand{\Tr}{{\rm Tr}}

\newcommand{\pass}{\text{pass}}
\newcommand{\ave}[1]{\langle #1\rangle}               
\renewcommand{\geq}{\geqslant}
\renewcommand{\leq}{\leqslant}

\newcommand{\mted}[3]{\langle#1|#2|#3\rangle }

\newcommand{\ot}{\otimes}
\newcommand{\ad}{^\dagger}

\newcommand*{\id}{\openone}

\newcommand{\rholh}{\hat{\rho}}

\newcommand{\rhot}{\tilde{\rho}}


\newcommand{\al}{\alpha }

\newcommand{\gm}{\gamma }
\newcommand{\Gm}{\Gamma }


\newcommand{\Th}{\Theta }

\newcommand{\lm}{\lambda }
\newcommand{\lmv}{\vec{\lambda} }
\newcommand{\gmv}{\vec{\gamma} }
\newcommand{\Gmv}{\vec{\Gamma} }

\newcommand{\sg}{\sigma }

\newcommand{\om}{\omega }
\newcommand{\Om}{\Omega }

\newcommand{\reg}{M}

\newtheoremstyle{example}{\topsep}{\topsep}%
{}
{\parindent}
{\itshape}
{:}
{   }
{\thmname{#1}\thmnumber{ #2}}
\theoremstyle{example}

\newtheorem{theorem}{Theorem}
\newtheorem{lemma}[theorem]{Lemma}

\newtheorem{proposition}[theorem]{Proposition}

\theoremstyle{definition}

\newcommand{\gt}{(59)}
\newcommand{\gtapplied}{(60)}
\newcommand{\sigmaoptimal}{(63)}
\newcommand{\alphacoherence}{(43)}
\newcommand{\coherencesigma}{(50)}
\newcommand{\strongduality}{(48)}
\newcommand{\devwin}{(1)}
\newcommand{\constraints}{(5)}

\begin{document}

\title{Numerical approach for unstructured quantum key distribution}

\author{Patrick J. Coles}
\email{pcoles@uwaterloo.ca}
\affiliation{Institute for Quantum Computing and Department of Physics and Astronomy, University of Waterloo, N2L3G1 Waterloo, Ontario, Canada}

\author{Eric M. Metodiev}
\affiliation{Institute for Quantum Computing and Department of Physics and Astronomy, University of Waterloo, N2L3G1 Waterloo, Ontario, Canada}

\author{Norbert L\"utkenhaus}
\affiliation{Institute for Quantum Computing and Department of Physics and Astronomy, University of Waterloo, N2L3G1 Waterloo, Ontario, Canada}

\begin{abstract}
Quantum key distribution (QKD) allows for communication with security guaranteed by quantum theory. The main theoretical problem in QKD is to calculate the secret key rate for a given protocol. Analytical formulas are known for protocols with symmetries, since symmetry simplifies the analysis. However, experimental imperfections break symmetries, hence the effect of imperfections on key rates is difficult to estimate. Furthermore, it is an interesting question whether (intentionally) asymmetric protocols could outperform symmetric ones. Here, we develop a robust numerical approach for calculating the key rate for arbitrary discrete-variable QKD protocols. Ultimately this will allow researchers to study ``unstructured'' protocols, that is, those that lack symmetry. Our approach relies on transforming the key rate calculation to the dual optimization problem, which dramatically reduces the number of parameters and hence the calculation time. We illustrate our method by investigating some unstructured protocols for which the key rate was previously unknown.
\end{abstract}

\break
\newpage
\newpage
\maketitle

Quantum key distribution (QKD) will play an important role in quantum-safe cryptography, i.e., cryptography that addresses the emerging threat of quantum computers \cite{Campagna2015}. Since its original proposal~\cite{Wiesner1983, Bennett1984}, QKD has developed dramatically over the past three decades~\cite{Scarani2009, Lo2014}, both in theory and implementation. Indeed, QKD is now a commercial technology, with the prospect of global QKD networks on the horizon~\cite{Sasaki2011, Wang2014}.

The main theoretical problem in QKD is to calculate how much secret key can be distributed by a given protocol. A crucial practical issue is that the QKD protocols that are easiest to implement with existing optical technology do not necessarily coincide with the protocols that are easiest to analyze theoretically \cite{Scarani2009}. Currently, calculating the secret key output of a protocol is typically extremely technical, and hence only performed by skilled experts. Furthermore, each new protocol idea requires a new calculation, tailored to that protocol. Ultimately the technical nature of these calculations combined with the lack of universal tools limits the pace at which new QKD protocols can be discovered and analyzed. Here, we address this problem by developing a robust, user-friendly framework for calculating the secret key output, with the hope of bringing such calculations ``to the masses''.

The secret key output is typically quantified by the key rate, which refers to the number of bits of secret key established divided by the number of distributed quantum systems. Operationally this corresponds to the question of how much privacy amplification Alice and Bob must apply to transform their raw key into the final secure key. Analytical simplifications of the key rate calculation can be made for some special protocols that have a high degree of symmetry \cite{Ferenczi2012}. Examples of such symmetric protocols, where the signal states have a group-theoretic structure, include the BB84 \cite{Bennett1984} and six-state protocols \cite{Bruss2002}. Indeed the key rate is known for these protocols. However, in practice, lack of symmetry is the rule rather than the exception. That is, even if experimentalists try to implement a symmetric protocol, experimental imperfections tend to break symmetries \cite{Gottesman2004}. Furthermore, it is sometimes desirable due to optical hardware issues to implement asymmetric protocols, e.g., as in Ref.~\cite{Fung2006}.

We refer to general QKD protocols involving signal states or measurement choices that lack symmetry as ``unstructured'' protocols. Some recent work has made progress in bounding the key rate for special kinds of unstructured protocols, such as four-state protocols in Ref.~\cite{Maroy2010, Woodhead2013} and qubit protocols in Ref.~\cite{Tamaki2014}. Still, there is no general method for computing tight bounds on the key rate for arbitrary unstructured protocols. Yet, these are the protocols that are most relevant to experimental implementations.

This motivates our present work, in which we develop an efficient, numerical approach to calculating key rates. Our ultimate aim is to develop a computer program, where Alice and Bob input a description of their protocol (e.g., their signal states, measurement devices, sifting procedure, and key map) and their experimental observations, and the computer outputs the key rate for their protocol. This program would allow for any protocol, including those that lack structure.

At the technical level, the key rate problem is an optimization problem, since one must minimize the well-known entropic formula for the key rate \cite{Devetak2005} over all states $\rho_{AB}$ that satisfy Alice's and Bob's experimental data. The main challenge here is that this optimization problem is unreliable and inefficient. In this work, we give a novel insight that transforming to the dual problem (e.g., see \cite{Boyd2010}) resolves these issues, hence paving the way for automated key rate calculations.

Specifically, the unreliable (or unphysical) aspect of the primal problem is that it is a minimization, hence the output will in general be an upper bound on the key rate. But one is typically more interested in reliable lower bounds, i.e., physically achievable key rates. Transforming to the dual problem allows one to formulate the problem as a maximization, and hence approach the key rate from below. Therefore, every number outputted from our computer program represents an achievable asymptotic key rate, even if the computer did not reach the global maximum.

The inefficient aspect of the primal problem is that the number of parameters grows as $d_A^2d_B^2 $ for a state $\rho_{AB}$ with $d_A=\dim(\HC_A)$ and $d_B=\dim(\HC_B)$. For example, if $d_A = d_B = 10$, the number of parameters that one would have to optimize over is 10000. In contrast, in the dual problem, the number of parameters is equal to the number of experimental constraints that Alice and Bob choose to impose.  For example, in the generalization of the BB84 protocol to arbitrary dimensions \cite{Sheridan2010,Mafu2013}, Alice and Bob typically consider two constraints, their error rates in the two mutually-unbiased bases (MUBs).  So, for this protocol, we have reduced the number of parameters to something that is constant in dimension. We therefore believe that our approach (of solving the dual problem) is ideally suited to efficiently calculate key rates in high dimensions.

We have written a MATLAB program to implement our key rate calculations. To illustrate the validity of our program, we show (see Fig.~\ref{fgrBB84sixstate}) that it exactly reproduces the known theoretical dependence of the key rate on error rate, for both the BB84 and six-state protocols. 

But ultimately the strength of our approach is its ability to handle unstructured protocols. We demonstrate this by investigating some unstructured protocols for which the key rates were not previously known. For example, we study a general class of protocols where Alice and Bob measure $n$ MUBs, with $2\leq n \leq d+1$, in dimension $d$. Also, we investigate the B92 protocol \cite{Bennett1992}, which involves two signal states whose inner product is arbitrary. Our key rates are higher than known analytical lower bounds \cite{Tamaki2004, Renner2005a} for B92. Finally, we argue that our approach typically gives dramatically higher key rates than those obtained from an analytical approach based on the entropic uncertainty relation \cite{Berta2010,Tomamichel2012a}.

We focus on asymptotic key rates in this work. Nevertheless, the optimization problem that we solve is also at the heart of finite-key analysis, e.g., see Refs.~\cite{Scarani2008, Sano2010}. We therefore hope to extend our approach to the finite-key scenario in future efforts. We remark that finite-size effects generally reduce the key rate below its asymptotic value.

In what follows we first present our main result: a reformulation of the key rate optimization problem in such a way that it is easily computable. We then outline our general framework for treating a broad range of protocols. Finally we illustrate our approach with various examples.

\begin{figure}
\begin{center}
\includegraphics[width=3.3in]{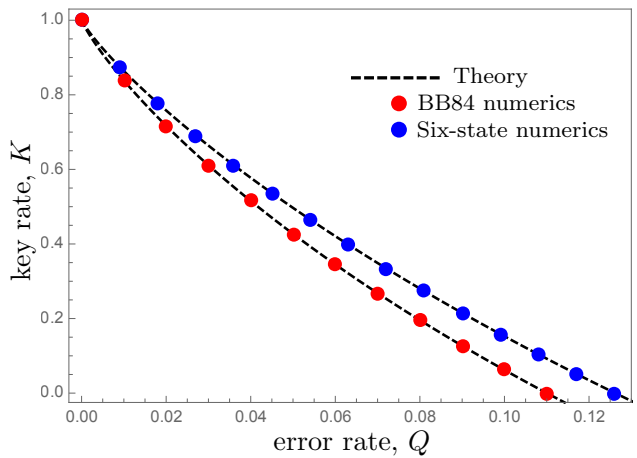}
\caption{Key rate for two well-known QKD protocols. Here we compare our numerics (from Theorem~\ref{thm1}) with the theoretical curves. The results of our numerical optimization for the BB84 and six-state protocols are respectively shown as red and blue dots. The known theoretical curves for these protocols are also shown as black dashed lines. The dots should be viewed as reliable lower bounds on the key rate, but in this case they happen to be perfectly tight, coinciding with the theoretical curves.}
\label{fgrBB84sixstate}
\end{center}
\end{figure}

\section*{\uppercase{Results}}

\noindent\textbf{Setup of the problem.} Consider a general entanglement-based (EB) QKD protocol involving finite-dimensional quantum systems $A$ and $B$ that are respectively received by Alice and Bob. Note that prepare-and-measure QKD protocols can be recast as EB protocols, as discussed below. For simplicity of presentation, we consider protocols where Alice's raw key is derived from a measurement on her system, possibly after some post-selection corresponding to a public announcement with a binary outcome, ``pass'' or ``fail''. However, our approach can easily be extended to more general protocols.

Let $Z_A$ ($Z_B$) denote the measurement that Alice (Bob) performs on system $A$ ($B$) in order to derive the raw key.  Suppose they use one-way direct reconciliation for the classical post-processing and that their error correction is optimal (i.e., leaks out the minimum number of bits), then the asymptotic key rate is given by the Devetak-Winter formula \cite{Devetak2005}:
\begin{align}
\label{eqnDevWin1}
K &=  H(Z_A | E) - H(Z_A | Z_B) \,.
\end{align}
In \eqref{eqnDevWin1}, $H(X|Y) := H(\rho_{XY}) - H(\rho_{Y})$ is the conditional von Neumann entropy, with $H(\sigma) := -\Tr (\sigma\log_2 \sigma)$, and
\begin{align}
\label{eqncqccstates}
\rho_{Z_A Z_B} &= \sum_{j,k} \Tr [(Z_A^j\ot Z_B^k) \rho_{AB}] \dya{j} \ot \dya{k} ,\\ 
\rho_{Z_A E} &= \sum_j \dya{j} \ot \Tr_A[(Z_A^j\ot \id) \rho_{AE}] .
\end{align}
Here, $\rho_{ABE}$ is the tripartite density operator shared by Alice, Bob, and Eve (and it may be the state after some post-selection procedure, see our general framework below). Also, $ \{Z_A^j\}$ and $\{Z_B^k\}$ are the sets of positive operator valued measure (POVM) elements associated with Alice's and Bob's key-generating measurements. In what follows we refer to $ \{Z_A^j\}$ as the key-map POVM.

In the previous paragraph and in what follows, we assume that the state shared by Alice, Bob, and Eve has an i.i.d.\ (independent, identically distributed) structure, and hence it makes sense to discuss the state $\rho_{ABE}$ associated with a single round of quantum communication. To avoid confusion, we emphasize that our approach is ``unstructured'' in the sense of lacking structure for a given round of quantum communication, but we do impose the i.i.d.\ structure that relates one round to the other rounds. This i.i.d.\ structure corresponds to Eve doing a so-called collective attack. However, the security of our derived asymptotic key rate also holds against the most general attacks (coherent attacks) if one imposes that the protocol involves a random permutation of the rounds (a symmetrization step) such that the de Finetti theorem \cite{Renner2005, Renner2007} or the postselection technique \cite{Christandl2009} applies.

Typically Alice's and Bob's shared density operator $\rho_{AB}$ is unknown to them. A standard part of QKD protocols is for Alice and Bob to gather data through local measurements, and in a procedure known as parameter estimation, they use this data to constrain the form of $\rho_{AB}$. The measurements used for this purpose can, in general, be described by bounded Hermitian operators $\Gamma_i $, with the set of such operators denoted by $\vec{\Gamma} = \{  \Gamma_i  \}$.

From their data, Alice and Bob determine the average value of each of these measurements: 
\begin{align}
\label{eqnconstraints}
\vec{\gamma} = \{\gamma_i\}, \quad \text{with  } \gamma_i:= \ave{\Gamma_i} = \Tr(\rho_{AB}\Gamma_i) \,,
\end{align}
and this gives a set of experimental constraints:
\begin{align}
\label{eqnconstraintset}
C = \{ \Tr(\rho_{AB}\Gamma_i ) = \gamma_i\}.
\end{align}
We denote the set of density operators that are consistent with these constraints as:
\begin{align}
\label{eqnconstrainedstateset}
\CC = \{ \rho_{AB} \in \PC_{AB} : C \text{ holds}\}
\end{align}
where $\PC_{AB}$ denotes the set of positive semidefinite operators on $\HC_{AB}$, and an additional constraint $\ave{\id} = 1$ is assumed to be added to the set $C$ to enforce normalization.

Because Alice and Bob typically do not perform full tomography on the state, $\CC$ includes many density operators, and hence the term $H(Z_A|E)$ in \eqref{eqnDevWin1} is unknown. To evaluate the key rate, Alice and Bob must consider the most pessimistic of scenarios where $H(Z_A|E)$ takes on its smallest possible value that is consistent with their data. This is a constrained optimization problem, given by
\begin{align}
\label{eqnprimalproblem5}
K &= \min_{\rho_{AB}\in \CC}\left[ H(Z_A | E) - H(Z_A | Z_B) \right]
\end{align}
where Eve's system $E$ can be assumed to purify $\rho_{AB}$ since it gives Eve the most information. Here the number of parameters in the optimization is $(d_Ad_B)^2$, corresponding to the number of parameters in a positive semidefinite operator on $\HC_{AB}$. We refer to \eqref{eqnprimalproblem5} as the primal problem.

\bigskip

\noindent\textbf{Main result.} Our main result is a reformulation of the optimization problem in \eqref{eqnprimalproblem5}.

\begin{theorem}
\label{thm1}
The solution of the minimization problem in \eqref{eqnprimalproblem5} is lower bounded by the following maximization problem:
\begin{align}
\label{eqnmainresult1}
K & \geq  \frac{\Th}{\ln 2} -H(Z_A|Z_B)
\end{align}
where 
\begin{align}
\label{eqnmainresult2}
\Th &:=\max_{\lmv} \left(-\bigg\|\sum_j Z_{A}^j  R (\lmv )   Z_{A}^j\bigg\| - \lmv \cdot \vec{\gamma} \right),
\end{align}
and 
\begin{align}
\label{eqnmainresult3}
 R(\lmv ) &:= \exp \left(-\id - \lmv \cdot \vec{\Gamma} \right).
\end{align}
In \eqref{eqnmainresult2}, the optimization is over all vectors $\lmv = \{\lambda_i\}$, where the $\lambda_i$ are arbitrary real numbers and the cardinality of $\lmv$ is equal to that of $\vec{\Gamma}$. Also, $\|M \|$ denotes the supremum norm of $M$, which is the maximum eigenvalue of $M$ when $M$ is positive semidefinite, as in \eqref{eqnmainresult2}.
\end{theorem}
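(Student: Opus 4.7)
The plan is to reduce $H(Z_A|E)$ to a quantum relative entropy, then apply weak Lagrangian duality, with the inner minimization handled by the Gibbs variational principle combined with the Golden--Thompson inequality.

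First, I would establish the identity
\begin{equation*}
H(Z_A|E)_{\rho_{ABE}} \;=\; D\bigl(\rho_{AB}\,\big\|\,\ZC_A(\rho_{AB})\bigr),
\end{equation*}
where $\ZC_A(\cdot):=\sum_j(Z_A^j\otimes \id_B)(\cdot)(Z_A^j\otimes \id_B)$ is the dephasing channel for the (assumed projective) key-map measurement, and $\ket{\psi}_{ABE}$ purifies $\rho_{AB}$ (the general-POVM case reduces to this by Naimark dilation). The derivation uses the Stinespring isometry $V_A=\sum_j\ket{j}_{Z_A}\otimes Z_A^j$ applied to $\ket{\psi}_{ABE}$: the resulting pure state on $Z_AABE$ has $AB$-marginal $\ZC_A(\rho_{AB})$, so by purity $H(Z_AE)=H(AB)=H(\ZC_A(\rho_{AB}))$, while the $E$-marginal is unchanged, giving $H(E)=H(\rho_{AB})$. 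The relation $D(\rho\|\ZC_A(\rho))=H(\ZC_A(\rho))-H(\rho)$, which follows from block-diagonality of $\ZC_A(\rho)$, then closes the identity. Because the constraints include the joint statistics of $Z_A,Z_B$, the term $H(Z_A|Z_B)$ is the same for all $\rho\in\CC$ and factors out; working in natural units it therefore suffices to prove $\min_{\rho\in\CC}D(\rho\|\ZC_A(\rho))\geq \Th$, after which \eqref{eqnmainresult1} follows by dividing by $\ln 2$ and reinserting $-H(Z_A|Z_B)$.

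Second, I would use the Gibbs variational principle: for any Hermitian $K$ and any $\rho\geq 0$,
\begin{equation*}
\Tr[\rho\ln\rho]-\Tr[\rho K] \;\geq\; -\Tr[e^{K-\id}],
\end{equation*}
with equality at $\rho=e^{K-\id}$. Applied to the $\rho$-dependent choice $K=\ln\ZC_A(\rho)-M$, where $M:=\lmv\cdot\Gmv$, and using block-diagonality to write $\Tr[\rho\ln\ZC_A(\rho)]=\Tr[\ZC_A(\rho)\ln\ZC_A(\rho)]$, this produces
\begin{equation*}
D(\rho\|\ZC_A(\rho))+\Tr[\rho M] \;\geq\; -\Tr\bigl[e^{\ln\ZC_A(\rho)-M-\id}\bigr].
\end{equation*}
The right-hand side still depends on $\rho$ via $\ZC_A(\rho)$; to remove that dependence I would invoke Golden--Thompson, $\Tr[e^{A+B}]\leq \Tr[e^Ae^B]$, with $A=\ln\ZC_A(\rho)$ and $B=-M-\id$, which gives
\begin{equation*}
\Tr\bigl[e^{\ln\ZC_A(\rho)-M-\id}\bigr] \;\leq\; \Tr\bigl[\ZC_A(\rho)\,R(\lmv)\bigr] \;=\; \Tr\bigl[\rho\,\ZC_A(R(\lmv))\bigr],
\end{equation*}
using self-adjointness of $\ZC_A$. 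Combining, $D(\rho\|\ZC_A(\rho))+\Tr[\rho M]\geq -\Tr[\rho\,\ZC_A(R(\lmv))]$.

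Third, specializing to $\rho\in\CC$, the constraints give $\Tr[\rho M]=\lmv\cdot\gmv$, and normalization $\Tr\rho=1$ together with $\ZC_A(R(\lmv))\geq 0$ gives $\Tr[\rho\,\ZC_A(R(\lmv))]\leq\|\ZC_A(R(\lmv))\|$. Rearranging,
\begin{equation*}
D(\rho\|\ZC_A(\rho)) \;\geq\; -\Bigl\|\sum_jZ_A^jR(\lmv)Z_A^j\Bigr\| - \lmv\cdot\gmv,
\end{equation*}
valid for every $\lmv$ and every feasible $\rho$. Taking the min over $\rho\in\CC$ on the left and the max over $\lmv$ on the right yields the dual bound $\Th$, from which \eqref{eqnmainresult1} follows. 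The main obstacle is step two: a $\rho$-independent $K$ in Gibbs does not match the nonlinearity in $\rho$ of $D(\rho\|\ZC_A(\rho))$, while a natural $\rho$-dependent $K$ leaves residual $\rho$-dependence on the right. The key technical move is choosing $K$ so the exponent splits as $\ln\ZC_A(\rho)$ plus a $\rho$-independent Hermitian, at which point Golden--Thompson linearizes the trace in $\ZC_A(\rho)$ and normalization converts the resulting linear functional into the operator norm appearing in $\Th$.
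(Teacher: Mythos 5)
Your proof is correct, and it reaches the paper's bound by a leaner route than the paper's own derivation. The paper sets the problem up as a formal Lagrangian dual: it rewrites $H(Z_A|E)=D(\rho_{AB}\|\ZC_A(\rho_{AB}))$ (citing earlier work rather than reproving it, as you do via the Stinespring isometry), then invokes the auxiliary-state representation $\Phi(\rho,\Pi)=\min_{\sigma}D(\rho\|\ZC_A(\sigma))$ (its Lemma 2), interchanges the two minimizations, solves the inner unconstrained minimization over $\rho$ for fixed $(\sigma,\lmv)$ by citing Zorzi et al.\ (minimizer $\exp(-\id-\lmv\cdot\Gmv+\ln\ZC_A(\sigma))$), and only then applies Golden--Thompson and maximizes over $\sigma$ to obtain the sup norm. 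You dispense with the auxiliary $\sigma$, the swap, and the duality scaffolding: applying the Gibbs variational principle with the state-dependent choice $K=\ln\ZC_A(\rho)-\lmv\cdot\Gmv$ gives a pointwise inequality for every feasible $\rho$ and every $\lmv$, Golden--Thompson removes the residual $\rho$-dependence, and normalization yields $\|\ZC_A(R(\lmv))\|$. The Gibbs step is mathematically the same fact the paper imports from Zorzi et al., so the essential ingredients coincide, but your packaging is shorter, self-contained, and needs neither Lemma 2 nor any strong-duality discussion (which the paper defers to a supplement and uses only for tightness, not for the bound itself); what the paper's dual framing buys in exchange is the interpretation of $\Theta$ as, up to the Golden--Thompson relaxation, the exact Lagrange dual, which underlies its tightness claims (Proposition 2) and its numerical algorithm. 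Two minor caveats: your parenthetical Naimark reduction for general POVMs would produce the bound with the dilated projectors rather than the stated formula featuring the POVM elements $Z_A^j$ themselves (the paper instead treats that case in Supplementary Note 5 via the inequality $H(Z_A|E)\geq D(\rho\,\|\sum_j Z_A^j\rho Z_A^j)$), and when $\ZC_A(\rho)$ is singular you should restrict to its support (which contains that of $\rho$) or perturb before writing $\ln\ZC_A(\rho)$; both points are easily repaired and do not affect the projective-measurement case you prove, which is the same case the paper proves in its main text.
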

\bigskip

The proof of Theorem~\ref{thm1} is given in the Methods section. It relies on the duality of convex optimization problems, as well as some entropic identities that allow us to simplify the dual problem.  Note that the term $H(Z_A|Z_B)$ in \eqref{eqnmainresult1} is pulled outside of the optimization since Alice and Bob can compute it directly from their data.

The cardinalities of the sets $\lmv$ and $\vec{\Gamma}$ are the same. This means that the number of parameters $\lambda_i$ that one must optimize over, to solve \eqref{eqnmainresult2}, is equal to the number of experimental constraints that Alice and Bob have. (More precisely this is the number of independent constraints, since one can eliminate constraints that carry redundant information). This has the potential to be significantly less than the number of parameters in the primal problem. Indeed we demonstrate below that \eqref{eqnmainresult2} can be easily solved using MATLAB on a personal computer for a variety of interesting QKD protocols.

\bigskip

\noindent\textbf{Formulating constraints.} For a given protocol, how does one decide which constraints to include in the set $C$? Consider the following remarks. First, adding in more constraints will never decrease the key rate obtained from our optimization. This follows since adding a new constraint gives an additional $\lambda_i$ to maximize over, while setting this new $\lambda_i$ to zero recovers the old problem. Second, coarse-graining constraints, i.e., merging two constraints $\ave{\Gm_i} = \gm_i$ and $\ave{\Gm_j} = \gm_j$ into a single constraint $\ave{\Gm_i +\Gm_j} = \gm_i+\gm_j$, will never increase the key rate obtained from our optimization. This follows since merging two constraints means that two $\lambda_i$'s are merged into a single $\lambda_i$, thus restricting the optimization. Hence, to obtain the highest key rates, one should input all of one's refined knowledge that is available into our optimization. On the other hand, coarse-graining reduces the number of constraints and thus may help to simplify the optimization problem, possibly at the cost of a reduced key rate.

One's refined knowledge is captured as follows. In a general EB protocol, Alice measures a POVM (whose elements may be non-commuting, e.g., if she randomly measures one of two MUBs), which we denote as $\Gamma_A = \{ \Gamma_{A,i} \}$. Likewise Bob's corresponding POVM is $\Gamma_B = \{ \Gamma_{B,i} \}$. Hence, through public discussion, Alice and Bob obtain knowledge of expectation values of the form
\begin{align}
\label{eqnrefinedknowledge}
\Tr \left[\rho_{AB} (\Gamma_{A,i} \ot \Gamma_{B,j}) \right] = \gamma_{ij}\,,\quad \text{for each } i,j\,.
\end{align}
These constraints form the set $C$ in \eqref{eqnconstraintset}. We remark that it is common in the QKD field to express correlations in terms of average error rates rather than in terms of the joint probability distribution in \eqref{eqnrefinedknowledge}. This is an example of the coarse-graining that we mentioned above. For simplicity of presentation, we will do this sort of coarse-graining for some protocols that we investigate below, although \eqref{eqnrefinedknowledge} represents our general framework for constructing $C$.

\bigskip

\noindent\textbf{Framework for prepare-and-measure.} While our approach is stated in the EB scenario, let us note how it applies to prepare-and-measure (PM) protocols. Consider a PM protocol involving a set of $N$ signal states $\{\ket{\phi_j}\}$, which Alice sends with probabilities $\{p_j\}$. It is well-known that PM protocols can be recast as EB protocols using the source-replacement scheme (see, e.g., \cite{Bennett1992a, Scarani2009, Ferenczi2012}). Namely, one forms the entangled state:
\begin{align}
\label{eqnsourcereplace}
\ket{\psi_{AA'}}= \sum_j \sqrt{p_j}\ket{j}\ket{\phi_j}.
\end{align}
One imagines that Alice keeps system $A$, while system $A'$ is sent over an insecure quantum channel $\EC$ to Bob, resulting in
\begin{align}
\label{eqnsourcereplace2}
\rho_{AB} = (\IC\ot \EC)(\dya{{\psi}_{AA'}}).
\end{align}
The numerical optimization approach described above can then be applied to the state $\rho_{AB}$ in \eqref{eqnsourcereplace2}. However, in addition to the constraints obtained from Alice's and Bob's measurement results, we must add in further constraints to account for the special form of $\rho_{AB}$. In particular, note that the partial trace over $B$ gives
\begin{align}
\label{eqnsourcereplace3}
\rho_{A} = \sum_{j,k} \sqrt{p_j p_k} \ip{\phi_k}{\phi_j}\dyad{j}{k}.
\end{align}
The form of $\rho_{A}$, which is closely related to Gram matrix, depends on the inner products between the signal states, which (we assume) Alice knows. Suppose $\{\Om_i\}$ is a set of tomographically complete observables on system $A$, then one can add in the calculated expectation values $\{\om_i\}$ of these observables into the set of constraints. That is, add
\begin{align}
\label{eqnsourcereplace4}
\ave{\Om_i \ot \id} = \om_i, \quad \text{for each }i
\end{align}
to the set $C$ in \eqref{eqnconstraintset}. This will capture Alice's knowledge of her reduced density operator. 

\bigskip

\noindent\textbf{Framework for decoy states.} In decoy-state QKD \cite{Lo2005}, which aims to combat photon-number splitting attacks, Alice prepares coherent states of various intensities and then randomizes their phases before sending them to Bob. Our framework can handle decoy states simply by allowing for additional signal states to be added to the set $\{\ket{\phi_j}\}$ in \eqref{eqnsourcereplace}. For example, to treat decoy protocols with partial phase randomization \cite{Cao2015}, one can consider signal states that are bipartite (on the signal mode $S$ and the reference mode $R$) of the form
\begin{align}
\label{eqndecoy1}
\ket{\phi_{jkl }} = \ket{\alpha_j e^{i (\theta_k +\phi_l)}}_S \ot \ket{\alpha_j e^{i \theta_k}}_R
\end{align}
where $\alpha_j$ is the amplitude of the coherent state associated with the $j$-th intensity setting, $\theta_k$ is the $k$-th phase used in phase randomization, and $\phi_l$ is the phase Alice uses to encode her information (e.g., for generating key). Decoy protocols with complete phase randomization are also treatable in our framework, namely, by adding in a signal state for each photon-number basis state (up to a cut-off), and treating multi-photon signals as orthogonal states (so-called ``tagged states'') since Eve can perfectly distinguish them.

\bigskip

\noindent\textbf{Framework for MDI QKD.} A special kind of PM protocol is measurement-device-independent (MDI) QKD \cite{Lo2012}. In MDI QKD, Alice prepares states $\{\ket{\phi_j}\}$ with probabilities $\{p_j\}$ and sends them to Charlie, and Bob does the same procedure as Alice (see Fig.~\ref{fgrMDI}). Charlie typically does a Bell-basis measurement, however the security proof does not assume this particular form of measurement. Charlie announces the outcome of his measurement, which we denote by the classical register $\reg$. Our framework for treating MDI QKD considers the tripartite state $\rho_{AB\reg}$, where $A$ and $B$ respectively are Alice's and Bob's systems in the source-replacement scheme, playing the same role as system $A$ in \eqref{eqnsourcereplace} (see Supplementary Note~1 for elaboration). For our numerics, we impose the constraint that the marginal $\rho_{AB} = \rho_A\ot\rho_B$ is fixed (since Eve cannot access $A$ and $B$), with $\rho_A$ and $\rho_B$ given by the form in \eqref{eqnsourcereplace3}. We enforce this constraint using the same approach as used in \eqref{eqnsourcereplace4} to fix $\rho_A$ for PM protocols. The only other constraints we impose are the usual correlation constraints, i.e., a description of the joint probability distribution for the standard bases on $A$, $B$, and $\reg$, of the form
\begin{align}
\label{eqnmdi}
\Tr \left[\rho_{AB\reg } (\dya{j}\ot \dya{k} \ot \dya{m}  ) \right] = \gamma_{jkm}\,.
\end{align}

\bigskip

\noindent\textbf{Framework for post-selection and announcements.} In general, a QKD protocol may involve post-selection. As an example, if Alice sends photons to Bob over a lossy channel, then they may post-select on rounds in which Bob detects a photon. As noted above, for simplicity we consider protocols where the post-selection involves a binary announcement, and Alice and Bob keep (discard) rounds when ``pass'' (``fail'') is announced. Let $\GC$ be the completely-positive (CP) linear map corresponding to the post-selection. The action of $\GC(\cdot) = G(\cdot)G\ad$ is given by a single Kraus operator $G$, corresponding to the ``pass'' announcement.

The key rate formula \eqref{eqnDevWin1} should be applied to the post-selected state:
\begin{align}
\label{eqnpostselect1}
\rhot_{AB} = \GC(\rho_{AB})/p_{\pass}
\end{align}
where $p_{\pass} = \Tr (\GC(\rho_{AB}))$ is the probability for passing the post-selection filter. We remark that since $\GC$ is given by a single Kraus operator, it maps pure states to pure states, and hence taking Eve's system to purify the post-selected state $\rhot_{AB}$ is equivalent to taking it to purify $\rho_{AB}$. Hence applying the key rate formula to $\rhot_{AB}$ does not give Eve access to any more than she already has, and hence does not introduce any looseness into our bound. Future extension of our work to more general maps $\GC$ will need to carefully account for how Eve's system is affected by $\GC$, so as not to lose key rate from this proof technique.

The only issue is that Alice's and Bob's experimental constraints $C$ in \eqref{eqnconstraintset} are still in terms of state $\rho_{AB}$. To solve for the key rate, one must transform these constraints into constraints on $\rhot_{AB}$. For the special case where $\GC$ has an inverse $\GC^{-1}$ that is also CP, one can simply insert the identity channel $\IC = \GC^{-1}\GC$ into the expression $\Tr(  \rho_{AB}\Gamma_i ) = \Tr( \GC^{-1}\GC (\rho_{AB})\Gamma_i )$. Using cyclic permutation under the trace, we transform \eqref{eqnconstraintset} into a set of constraints on $\rhot_{AB}$,  
\begin{align}
\label{eqnconstraintsetpostselect}
\tilde{C} = \{ \Tr(\rhot_{AB} \tilde{\Gamma}_i ) = \tilde{\gamma}_i\}.
\end{align}
where the $\tilde{\Gamma}_i = (\GC^{-1})\ad(\Gamma_i)$ are Hermitian operators, with $(\GC^{-1})\ad$ being the adjoint of $\GC^{-1}$, and $\tilde{\gamma}_i = \gamma_i / p_{\pass}$. Note that $p_{\pass}$ is determined experimentally and hence the $\tilde{\gamma}_i$ are known to Alice and Bob. More generally, we provide a method for obtaining $\tilde{C}$ for arbitrary $\GC$, as described in Supplementary Note~2.

We remark that public announcements can be treated with a simple extension of our post-selection framework. While our framework directly applies to announcements with only two outcomes corresponding to ``pass'' or ``fail'' (as discussed above), more general announcements can be treated by adding classical registers that store the announcement outcomes. Our treatment of MDI QKD is an example of this approach (see Fig.~\ref{fgrMDI} and Supplementary Note 1). Additional examples that could be treated in this way are protocols with two-way classical communication \cite{Gottesman2003} such as advantage distillation.

\begin{figure}
\begin{center}
\includegraphics[width=3.3in]{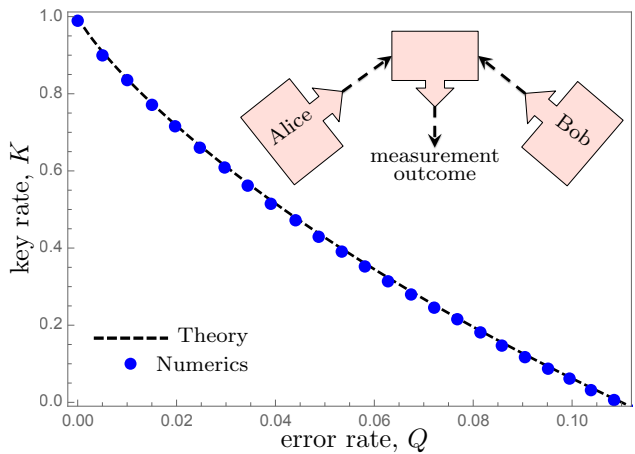}
\caption{Key rate for MDI QKD with the BB84 signal states. The inset shows the basic idea of MDI QKD: Alice and Bob each prepare a signal state and send it to an untrusted node, which performs an (untrusted) Bell-basis measurement and announces the outcome. Our numerics (circular dots) essentially reproduce the known theoretical dependence of the key rate on the error rate (dashed curve), which is the same expression as that given in \eqref{eqn2351}. See Supplementary Note~1 for elaboration.}
\label{fgrMDI}
\end{center}
\end{figure}

\bigskip

\noindent\textbf{Outline of examples.} We now illustrate our numerical approach for lower bounding the key rate by considering some well-known protocols. First, we consider the BB84 and six-state protocols (Fig.~\ref{fgrBB84sixstate}), MDI QKD with BB84 states (Fig.~\ref{fgrMDI}), and the generalized BB84 protocol involving two MUBs in any dimension (Fig.~\ref{fgrBB84highd}). In each case, the dependence of the key rate on error rate is known, and we show that our numerical approach exactly reproduces these theoretical dependences. After considering these structured protocols, we move on to using our numerical optimization for its intended purpose: studying unstructured protocols. The fact that our bound is tight for the structured protocols mentioned above gives reason to suspect that we will get strong bounds in the unstructured case. We investigate below a protocol involving $n$ MUBs, a protocol involving bases with arbitrary angle between them, and the B92 protocol.

\bigskip

\noindent\textbf{Example: BB84.} Consider an entanglement-based version of the BB84 protocol \cite{Bennett1984}, where Alice and Bob each receive a qubit and measure either in the $Z =\{\ket{0}, \ket{1}\}$ or $X = \{\ket{+}, \ket{-}\}$ basis, where $\ket{\pm}=(\ket{0}\pm \ket{1} )/\sqrt{2}$. For all protocols that we discuss, we assume perfect sifting efficiency, which can be accomplished asymptotically via asymmetric basis choice \cite{Lo2004}. Let us suppose that Alice and Bob each use their $Z$ basis in order to generate key. For simplicity, suppose they observe that their error rates in the $Z$ and $X$ bases are identical and equal to $Q$, then it is known (see, e.g., \cite{Scarani2009}) that the key rate is given by
\begin{align}
\label{eqn2351}
K = 1 - 2 h(Q)
\end{align}
where $h(p):= -p\log_2 p - (1-p)\log_2 (1-p)$ is the binary entropy.

To reproduce this result using our numerics, we write the optimization problem as follows: 
\begin{align}
\label{eqnBB84inputsA}\text{Key-map POVM:  }&Z_A = \{\dya{0},\dya{1}\}  \\
\label{eqnBB84inputsB}\text{Constraints:  }&\ave{\id} = 1 \\
\label{eqnBB84inputsC}&\ave{E_X} = Q\\
\label{eqnBB84inputsD}&\ave{E_Z} = Q
\end{align}
where the error operators are defined as
\begin{align}
\label{eqnEZdef}
E_Z &:= \dya{0}\ot\dya{1}+\dya{1}\ot\dya{0} \\
\label{eqnEXdef}
E_X &:= \dya{+}\ot\dya{-}+\dya{-}\ot\dya{+} .
\end{align}
Equations~\eqref{eqnBB84inputsA}-\eqref{eqnBB84inputsD} highlight the fact that, as far as the optimization in \eqref{eqnmainresult2} is concerned, a QKD protocol is defined by the POVM elements used for generating the key and the experimental constraints used for ``parameter estimation'' (and also the post-selection map $\GC$, but this is trivial for the ideal BB84 protocol.). Once these things are specified, the protocol is defined and the key rate is determined. Numerically solving the problem defined in \eqref{eqnBB84inputsA}-\eqref{eqnBB84inputsD} for several values of $Q$ leads to the red dots in Fig.~\ref{fgrBB84sixstate}, which agree perfectly with the theory curve.

\bigskip

\noindent\textbf{Example: Six state.} Now consider an entanglement-based version of the six-state protocol, where Alice and Bob each measure one of three MUBs ($X$, $Y$, or $Z$) on their qubit. Suppose that Alice and Bob observe that their error rates in all three bases are identical, $\ave{E_X} = \ave{E_Y} =\ave{E_Z} = Q$, where
\begin{align}
E_Y &:= \dya{y_+}\ot\dya{y_+}+\dya{y_-}\ot\dya{y_-}\,,
\end{align}
with $\ket{y_{\pm}} = (\ket{0} \pm i \ket{1})/\sqrt{2}$. (Our definition of $E_Y$ reflects the fact that the standard Bell state is correlated in $Z$ and $X$ but anti-correlated in $Y$.) To reproduce the known key rate \cite{Bruss2002,Renner2005a}, we write the optimization problem as: 
\begin{align}
\label{eqnsixstateinputsA}\text{Key-map POVM:  }&Z_A = \{\dya{0},\dya{1}\}  \\
\label{eqnsixstateinputsB}\text{Constraints:  }&\ave{\id} = 1 \\
\label{eqnsixstateinputsC}&\ave{E_{XY}} = Q\\
\label{eqnsixstateinputsD}&\ave{E_Z} = Q \,,
\end{align}
where $E_{XY} := (E_X+E_Y)/2$ quantifies the average error for $X$ and $Y$. Note that the constraint $\ave{E_{XY}} = Q$ is obtained by coarse-graining the individual error rates. In theory, one can get a stronger bound on the key rate by splitting up this constraint into $\ave{E_{X}} = Q$ and $\ave{E_{Y}} = Q$. However, our numerics show that this does not improve the key rate, and the constraints in \eqref{eqnsixstateinputsB}-\eqref{eqnsixstateinputsD} are enough to reproduce the theory curve. Indeed, numerically solving the problem in \eqref{eqnsixstateinputsA}-\eqref{eqnsixstateinputsD} leads to the blue dots in Fig.~\ref{fgrBB84sixstate}, which agree with the theory curve.

\bigskip

\noindent\textbf{Example: Two MUBs in higher dimensions.} A distinct advantage of our approach of solving \eqref{eqnmainresult2} instead of the primal problem \eqref{eqnprimalproblem5} is that we can easily perform the optimization in higher dimensions, where the number of parameters in \eqref{eqnprimalproblem5} would be quite large. To illustrate this, we consider a generalization of BB84 to arbitrary dimension, where Alice and Bob measure generalized versions of the $X$ and $Z$ bases. This protocol has been implemented, e.g., in Ref.~\cite{Mafu2013} using orbital angular momentum. Taking $Z$ as the standard basis $\{\ket{j}\}$, Alice's $X$ basis can be taken as the Fourier transform $\{F\ket{j}\}$, where
\begin{align}
\label{eqnfouriermatrix231}
F = \sum_{j,k} \frac{\om^{-j k}}{\sqrt{d}}\dyad{j}{k}
\end{align}
is the Fourier matrix, with $\om = e^{2\pi i/d}$, and for simplicity we choose Alice's and Bob's dimension to be equal: $d_A = d_B = d$. Bob's $X$ basis is set to $\{F^*\ket{j}\}$, where $F^*$ denotes the conjugate of $F$ in the standard basis.

\begin{figure}[t]
\begin{center}
\includegraphics[width=3.3in]{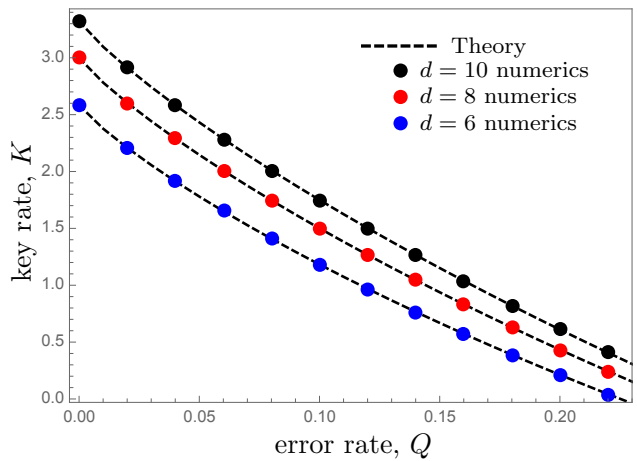}
\caption{Higher dimensional analog of BB84, using two MUBs. This plot shows the theoretical key rate as solid curves, and the result of our numerical optimization as circular dots, for $d_A=d_B = d$, with $d=6$ (blue), $d=8$ (red), and $d=10$ (black).  Again, the dots should be viewed as reliable lower bounds, but in this case they are perfectly tight.}
\label{fgrBB84highd}
\end{center}
\end{figure}

Suppose that Alice and Bob observe that their error rates in $Z$ and $X$ are identical. The theoretical key rates \cite{Sheridan2010,Ferenczi2012} for the cases $d=6,8,10$ are shown as dashed curves in Fig.~\ref{fgrBB84highd}, while our numerics are shown as circular dots. Clearly there is perfect agreement with the theory.

For our numerics we employ the same constraints as used for BB84 in \eqref{eqnBB84inputsA}-\eqref{eqnBB84inputsD}, but generalized to higher $d$. We again emphasize that the calculation of $\Th$ here is very efficient and can easily handle higher dimension. This is because the number of parameters one is optimizing over is independent of dimension - equal to the number of constraints, which in this case is 3. This is in sharp contrast to the primal problem in \eqref{eqnprimalproblem5}, where the number of parameters is $d^4$, which would be 10000 for $d=10$.

\bigskip

\bigskip

\noindent\textbf{Example: $n$ MUBs.} A simple generalization of the above protocols is to consider a set of $n$ MUBs in dimension $d$. For example, in prime power dimensions there exist explicit constructions for sets of $n$ MUBs with $2 \leq n  \leq d+1$ \cite{Bandyopadhyay2001}. Consider a protocol where we fix the set of $n$ MUBs, and in each round, Alice and Bob each measure their $d$ dimensional system in one basis chosen from this set. For general $n$ the symmetry group is not known for this protocol \cite{Ferenczi2012}, so one can consider it an unstructured protocol. Indeed, only for the special cases $n=2$ and $n=d+1$ do we have analytical formulas for the key rate \cite{Ferenczi2012}. Nevertheless it is straightforward to apply our numerics to this protocol for any $n$. Our results are shown in Fig.~\ref{fgrBB84nmubs} for $d=5$. To obtain these curves we only need three constraints, which are analogous to \eqref{eqnsixstateinputsB}-\eqref{eqnsixstateinputsD}, but generalized such that $\ave{E_{XY}}$ is replaced by the average error rate in all $n-1$ bases, excluding the basis used for key generation (the $Z$ basis).

\begin{figure}[t]
\begin{center}
\includegraphics[width=3.25in]{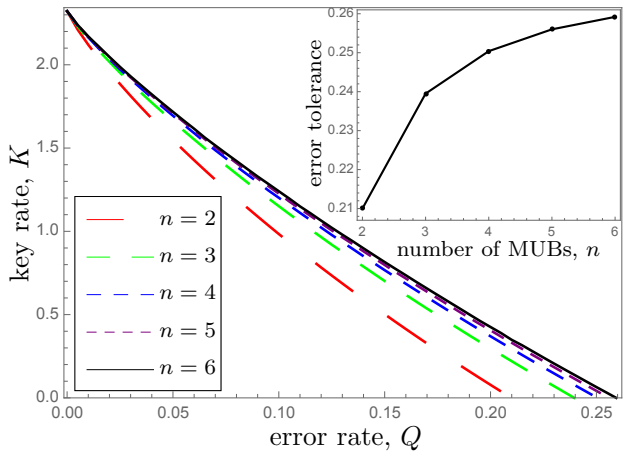}
\caption{Protocol where Alice and Bob each use $n$ MUBs. The key rate is plotted for various $n \in \{2, 3, 4, 5, 6\}$ and for $d_A=d_B = 5$. This is an unstructured protocol, since for intermediate values of $n$ the symmetry group and hence the key rate is unknown. However, our numerics provides the dependence of key rate on error rate for any $n$, as shown. The inset shows the error tolerance - the smallest error rate that makes the key rate vanish - as a function of $n$. Note that the largest jump in the error tolerance occurs from $n=2$ to $n=3$.}
\label{fgrBB84nmubs}
\end{center}
\end{figure}

Interestingly, Fig.~\ref{fgrBB84nmubs} shows that just adding one basis, going from $n=2$ to $n=3$, gives a large jump in the key rate, whereas there are diminishing returns as one adds more bases. This can be seen in the inset of Fig.~\ref{fgrBB84nmubs}, which plots the error tolerance (i.e., the value of $Q$ for which the key rate goes to zero) as a function of $n$. We have seen similar behavior for other $d$ besides $d=5$. After completion of this work, an analytical formula for $n=3$ was discovered \cite{Bradler2015}, and we have verified that it agrees perfectly with our numerics.

In Supplementary Note~3, we analytically prove the following.
\vspace{3pt}
\begin{proposition}
\label{prop1}
Our numerical results are perfectly tight for the protocols discussed in Figs.~\ref{fgrBB84sixstate}, \ref{fgrBB84highd}, and \ref{fgrBB84nmubs}. That is, for these protocols, our optimization exactly reproduces the primal optimization \eqref{eqnprimalproblem5}.
\end{proposition}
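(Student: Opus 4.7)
My plan is to show that the dual lower bound of Theorem~\ref{thm1} matches the primal value for each protocol in Figs.~\ref{fgrBB84sixstate}--\ref{fgrBB84nmubs}, by exhibiting matching primal and dual optimizers. Every listed protocol admits a symmetry group $G$ acting on $\HC_{AB}$ that leaves the key-map POVM $\{Z_A^j\}$, each constraint operator $\Gamma_i$, and the vector $\vec{\gamma}$ invariant: the Bell-preserving Pauli products for BB84 and six-state, the Weyl--Heisenberg group for generalized BB84 in arbitrary $d$, and (for the $n$-MUB protocol) permutations of the non-key MUBs together with the Weyl shift group in the key basis. By the standard symmetrization argument in QKD security analysis, a worst-case state for \eqref{eqnprimalproblem5} can be taken $G$-invariant; analogously, an optimal $\vec{\lambda}^*$ for \eqref{eqnmainresult2} can be taken constant on each $G$-orbit of constraints.

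On the $G$-invariant sector the feasible states are block-diagonal in the generalized Bell basis $\{(F^j\ot I)\ket{\Phi^+_d}\}_{j=0}^{d-1}$ with only a handful of free weights---fully fixed by $\vec{\gamma}$ in the maximally symmetric cases, and with one additional parameter for the intermediate $n$-MUB case. Inserting such a state into \eqref{eqnprimalproblem5} reduces the primal to a low-dimensional optimization whose value, for BB84, six-state, and generalized BB84 with two MUBs, is the known closed form $1-2h(Q)$, the six-state rate, and the Sheridan--Scarani rate, respectively. On the dual side, because all $\Gamma_i$ commute in the generalized Bell basis, $R(\vec{\lambda}^*)$ is diagonal there and the pinched operator $\sum_j Z_A^j R(\vec{\lambda}^*) Z_A^j$ has explicit eigenvalues; solving $\partial\Th/\partial\lambda_i = 0$ then yields a small algebraic system whose solution I would verify reproduces the same closed-form rate, closing the gap.

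The main obstacle is the $n$-MUB protocol at intermediate $n$ in Fig.~\ref{fgrBB84nmubs}, since no analytical key-rate formula is available for direct comparison. For those cases I would rely only on the manifest partial symmetry, which is still strong enough to restrict both the primal optimizer and the dual $\vec{\lambda}^*$ to a common low-dimensional parametric family indexed by the two effective error rates used in \eqref{eqnsixstateinputsC}--\eqref{eqnsixstateinputsD}. After verifying Slater's condition on this restricted sector (a convex mixture of the identity and the target $G$-invariant Bell-diagonal state provides a strictly feasible full-rank point for any $Q>0$), equality between the reduced primal and reduced dual should follow from KKT stationarity. The delicate step, which I expect to be the technical heart of the argument, is the block-diagonalization of $R(\vec{\lambda})$ along the irreducible components of the partial $G$-action and the verification that the relaxations implicit in the derivation of Theorem~\ref{thm1} become tight on this sector.
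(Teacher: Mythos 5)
There is a genuine gap. The quantity in Theorem~\ref{thm1} is not the dual problem itself but a further relaxation of it: after strong duality one still applies the Golden--Thompson inequality in Eq.~\eqref{eqn235632983} to get $\Th$, and the whole content of Proposition~\ref{prop1} is that \emph{this} step loses nothing for the listed protocols (strong duality, i.e.\ primal $=$ dual, is established once and for all in Supplementary Note~6 and is not protocol-specific). Your symmetry-reduction and KKT machinery only addresses primal--dual equality on a restricted sector, which is not where the possible looseness lies; you acknowledge the relaxation issue only in your last sentence as a ``delicate step'' to be verified later, with no argument supplied. The paper's proof consists precisely of that missing argument: a lemma showing the Golden--Thompson inequality is saturated whenever (a) all $\Gamma_i$ commute and (b) the pinching channel $\ZC_A$ maps the common eigenbasis projectors to operators diagonal in that same basis, followed by a verification that the generalized Bell basis satisfies (b) (since $\ZC_A(\dya{\phi_{q,r}})=\frac{1}{d}\sum_j\dya{\phi_{q,r+j}}$) and that the error operators $E_Z$, $E_X$, $E_Y$, $E_{XZ^k}$ are all Bell-diagonal. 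Your observation that the $\Gamma_i$ commute in the Bell basis is half of this, but without condition (b) and the commutator argument for $[Q(\lmv),\ZC_A(\sigma^*_{AB})]=0$, saturation does not follow.

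The gap is most damaging exactly where Proposition~\ref{prop1} has new content. For BB84, six-state, and two MUBs in dimension $d$, your plan of computing $\Th$ explicitly and matching it to the known closed-form rates would, if carried out, establish tightness without ever discussing Golden--Thompson (the numbers would certify it); that is a legitimate alternative route, though more laborious and only sketched. But for the $n$-MUB protocols at intermediate $n$ there is no closed form to match against --- which is the whole point of the proposition --- and there your argument reduces to ``reduced primal equals reduced dual,'' a statement that does not bound $\Th$ from below by the primal value. Without a protocol-independent saturation criterion of the paper's type (or an explicit evaluation of both $\Th$ and the primal on the symmetric sector showing they coincide), the claim for general $n$ remains unproven.
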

Note that this observation implies that key rate for protocols involving $n$ MUBs (as in Fig.~\ref{fgrBB84nmubs}) is now known; namely it is given by our numerical optimization.

\bigskip

\noindent\textbf{Example: Arbitrary angle between bases.} While MUBs are a special case, our approach can handle arbitrary angles between the different measurements or signal states. For example, we consider a simple qubit protocol \cite{Matsumoto2009a} where Alice and Bob each measure either the $Z$ or $W$ basis, where $W$ is rotated by an angle $\theta$ away from the ideal $X$ basis. This protocol provides the opportunity to compare our numerical approach to an analytical approach based on the entropic uncertainty principle, introduced in Refs.~\cite{Berta2010,Tomamichel2012a}. This is the state-of-the-art method for lower bounding the key rate. So for comparison, Fig.~\ref{fgrBB84vstheta} plots the bound obtained from the entropic uncertainty principle for bases $Z$ and $W$.

\begin{figure}[t]
\begin{center}
\includegraphics[width=3.45in]{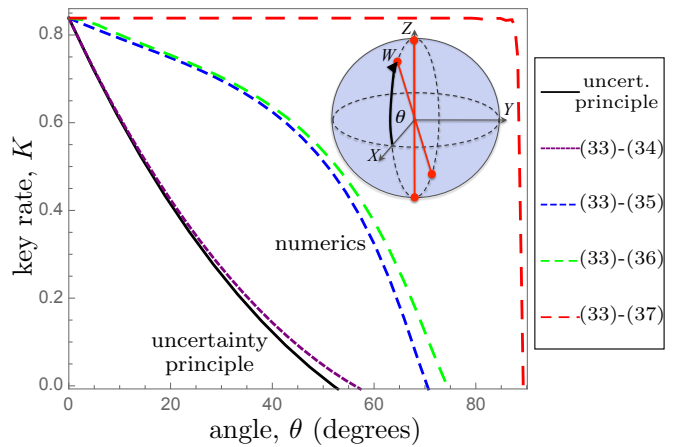}
\caption{Protocol where Alice and Bob each measure $Z$ or $W$. Here $Z$ is the standard basis and the $W$ basis is rotated by an angle $\theta$ away from the $X$ basis. The key rate versus $\theta$ is shown with the error rate set to $Q = 0.01$. Our numerics give a hierarchy of four lower bounds on the key rate, corresponding to adding in additional constraints from \eqref{eqnbb84vsthetaA}-\eqref{eqnbb84vsthetaE}. All of our bounds are tighter than the bound obtained from the entropic uncertainty principle. The plot indicates that the uncertainty principle gives a dramatically pessimistic key rate, much lower than the true key rate of the protocol.}
\label{fgrBB84vstheta}
\end{center}
\end{figure}

We apply our numerical approach with the constraints:
\begin{align}
\label{eqnbb84vsthetaA}\text{Constraints:  }&\ave{\id} = 1 \\
\label{eqnbb84vsthetaB}&\ave{\sigma_W\ot \sigma_W} = 1-2 Q \\
\label{eqnbb84vsthetaC}&\ave{\sigma_Z\ot \sigma_Z} = 1-2 Q\\
\label{eqnbb84vsthetaD}&\ave{\sigma_Z\ot \sigma_W} =(\sin\theta )(1-2 Q)\\
\label{eqnbb84vsthetaE}&\ave{\sigma_W\ot \sigma_Z} = (\sin\theta )(1-2 Q),
\end{align}
where $\sigma_Z$ and $\sigma_W$ are the Pauli operators associated with the $Z$ and $W$ bases. Fig.~\ref{fgrBB84vstheta} plots a hierarchy of lower bounds obtained from gradually adding in more of the constraints in \eqref{eqnbb84vsthetaA}-\eqref{eqnbb84vsthetaE}. As the plot shows, we already beat the entropic uncertainty principle with only the first two constraints. Furthermore, adding in all these constraints gives a dramatically higher bound, showing the uncertainty principle gives highly pessimistic key rates for this protocol. From an experimental perspective, Fig.~\ref{fgrBB84vstheta} is reassuring, in that small variations in $\theta$ away from the ideal BB84 protocol ($\theta =0$) have essentially no effect on the key rate. Fig.~\ref{fgrBB84vstheta} also highlights the fact that our approach allows us to systematically study the effect on the key rate of Alice and Bob using more or less of their data. In this example, we see that it is useful to keep data that one will usually discard in the sifting step of the protocol.

\bigskip

\noindent\textbf{Example: B92.} Next we consider the B92 protocol \cite{Bennett1992}, which is a simple, practical, unstructured protocol. It nicely illustrates our framework because it is inherently a prepare-and-measure protocol and it involves post-selection. In the protocol, Alice sends one of two non-orthogonal states $\{\ket{\phi_0} , \ket{\phi_1}\}$ to Bob. Since the Bloch-sphere angle $\theta$ between the two states is arbitrary, with $\ip{\phi_0}{\phi_1} = \cos (\theta/2)$, the protocol is unstructured. Bob randomly (with equal probability) measures either in basis $B_0 = \{\ket{\phi_0},\ket{\overline{\phi_0}}\}$ or basis $B_1 = \{\ket{\phi_1},\ket{\overline{\phi_1}}\}$, where $\ip{\phi_0}{\overline{\phi_0}}=\ip{\phi_1}{\overline{\phi_1}}=0$. If Bob gets outcome $\ket{\overline{\phi_0}}$ or $\ket{\overline{\phi_1}}$, then he publicly announces ``pass'', and he assigns a bit value of 1 or 0, respectively, to his key. Otherwise, Bob announces ``fail'' and they discard the round.

\begin{figure}
\begin{center}
\includegraphics[width=3.34in]{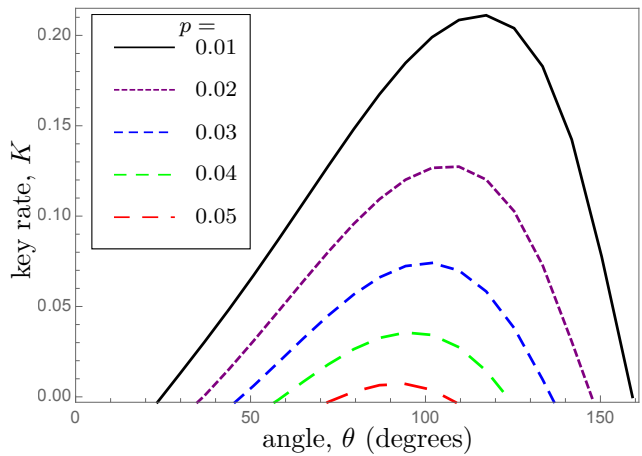}
\caption{The B92 protocol. The key rate (in bits per photon sent by Ailce) is plotted versus the Bloch-sphere angle between the two signal states. Curves are shown for various values of the depolarizing probability $p$.}
\label{fgrB92}
\end{center}
\end{figure}

A detailed description of the constraints we employed for B92 can be found in Supplementary Note~4. Our numerical results are shown in Fig.~\ref{fgrB92}. Fig.~\ref{fgrB92} shows that the optimal angle for maximizing key rate depends on the depolarizing noise $p$, although small deviations $\pm 5^\circ$ from the optimal angle do not affect the key rate much.

Our results give higher key rates for B92 than Refs.~\cite{Tamaki2004} and \cite{Renner2005a}, which respectively predicted positive key rates for $p\leq 0.034$ and $p\leq 0.048$, while we predict it for $p\leq 0.053$. On the other hand, Ref.~\cite{Matsumoto2013} directly solved the primal problem \eqref{eqnprimalproblem5} for B92 by brute-force numerics, and achieves positive key rate for $p\leq 0.065$. We have verified that the gap between our results and those of Ref.~\cite{Matsumoto2013} is due to the looseness of our usage of the Golden-Thompson inequality (see Eq.~\eqref{eqn235632983} in the Methods section). However, Ref.~\cite{Matsumoto2013} only showed a plot for $p\geq 0.046$, noting that the numerical optimization for the primal problem did not converge for smaller $p$ values. This highlights a benefit of going to the dual problem, in that we have no trouble with obtaining the full dependence on $p$.

\section*{\uppercase{Discussion}}

In conclusion, we address one of the main outstanding problems in QKD theory: how to calculate key rates for arbitrary protocols. Our main result is a numerical method for lower-bounding key rates that is both efficient and reliable. It is reliable in the sense that, by reformulating the problem as a maximization, every solution that one's computer outputs is an achievable key rate. It is efficient in the sense that we have reduced the number of parameters in the optimization problem from $d_A^2d_B^2$ down to the number of experimental constraints, which in some cases is independent of dimension.

The motivation for our work is two-fold. First, experimental imperfections tend to break symmetries, so theoretical techniques that exploit symmetries do not apply. Hence there is no general method currently available for calculating the effect of imperfections on the key rate. Second, it is interesting to ask whether protocols that are intentionally designed to lack symmetry might outperform the well-known symmetric protocols. Such a question cannot be posed without a method for calculating key rates for unstructured protocols. Just to give an example where the key rate is currently unknown, we plan to apply our approach to protocols where a small, discrete set of coherent states are the signal states and information is encoded in the phase~\cite{Lo2007}.

We envision that our method could be a standard tool for QKD researchers. In future work we hope to extend our approach to the finite-key scenario. Indeed, the optimization problem we solve is closely related to one appearing in finite-key analysis~\cite{Scarani2008}.

\section*{\uppercase{Methods}}\label{sctderivation}

\noindent\textbf{Outline.} Here we prove our main result, Theorem~\ref{thm1}. Our proof relies on several technical tools. First is the notion of the duality of optimization, i.e., transforming the primal problem to its dual problem. Second, we employ several entropic identities to simplify the dual problem. Third, we use a recent, important result from Ref.~\cite{Zorzi2014} that solves a relative entropy optimization problem.

For readability, we prove Theorem~\ref{thm1} here for the special case where the key-map POVM $Z_A = \{Z_A^j\}$ is a projective measurement, i.e., where the $Z_A^j$ are projectors (of arbitrary rank). We postpone the proof for arbitrary POVMs to Supplementary Note~5. 

\bigskip

\noindent\textbf{The primal problem.} First we rewrite \eqref{eqnprimalproblem5} as:
\begin{align}
\label{eqnprimalrestated4644}
K &= \left[\min_{\rho_{AB}\in \CC} H(Z_A | E) \right] - H(Z_A | Z_B) \,,
\end{align}
noting that the second term in \eqref{eqnprimalrestated4644}, $H(Z_A | Z_B)$, will be determined experimentally and hence can be pulled out of the optimization. We remark that, simply for illustration purposes we used Fano's inequality to upper-bound $H(Z_A | Z_B)$ in our figures; however, in practice $H(Z_A | Z_B)$ would be directly calculated from the data.

Since we only need to optimize the first term, we redefine the primal problem as
\begin{align}
\label{eqnprimalrestated4649}
\alpha &:= \min_{\rho_{AB}\in \CC} H(Z_A | E) \,,
\end{align}
and note that we can take $E$ to be a purifying system of $\rho_{AB}$, since that gives Eve the most information. Next we use a result for tripartite pure states $\rho_{ABE} = \dya{\psi}_{ABE}$ from Refs.~\cite{Coles2011,Coles2012} that relates the conditional entropy to the relative entropy:
\begin{align}
\label{eqnconrelentropyproj}
H(Z_A | E)= D\bigg(\rho_{A B} \bigg|\bigg| \sum_j Z_{A}^j \rho_{A B} Z_{A}^j\bigg)
\end{align}
where the relative entropy is defined by
\begin{align}
\label{eqnrelentropydef}
D(\sg || \tau):= \Tr(\sg\log_2 \sg)-\Tr(\sg\log_2 \tau).
\end{align}
We remark that the joint convexity of the relative entropy implies that the right-hand side of \eqref{eqnconrelentropyproj} is a convex function of $\rho_{AB}$. (See \cite{Watanabe2008} for an alternative proof of convexity.) Because of this, and the fact that the constraints in \eqref{eqnconstraintset} are linear functions of $\rho_{AB}$, \eqref{eqnprimalrestated4649} is a convex optimization problem \cite{Boyd2010}.  

It is interesting to point out the connection to coherence \cite{Baumgratz2013a}. For some set of orthogonal projectors $\Pi=\{\Pi^j \}$ that decompose the identity, $\sum_j \Pi^j = \id$, the coherence (sometimes called relative entropy of coherence) of state $\rho$ is defined as \cite{Baumgratz2013a}:
\begin{align}
\label{eqncoherencedef}
\Phi(\rho, \Pi) = D\bigg(\rho \bigg|\bigg|  \sum_j \Pi^j \rho \Pi^j \bigg)\,.
\end{align}
Rewriting the primal problem in terms of coherence gives
\begin{align}
\label{eqnfirsttermprimal2}
\alpha = \min_{\rho_{AB}\in \CC} \Phi(\rho_{AB}, Z_A)\,.
\end{align}
Hence we make the connection that calculating the secret key rate is related to optimizing the coherence.

This observation is important since the coherence is a continuous function of $\rho$ (see Supplementary Note~6). This allows us to argue in Supplementary Note~6 that our optimization problem satisfies the strong duality criterion \cite{Boyd2010}, which means that the solution of the dual problem is precisely equal to that of primal problem.

\bigskip

\noindent\textbf{The dual problem.} Now we transform to the dual problem. Due to a pesky factor of $\ln (2)$, it is useful to rescale the primal problem as follows:
\begin{align}
\label{eqnprimalrestatedhat}
\widehat{\al} := \al \ln(2) = \min_{\rho_{AB}\in \CC} \widehat{\Phi}(\rho_{AB}, Z_A)
\end{align}
where, henceforth, we generally use the notation $\widehat{M}:=M \ln(2)$, for any quantity $M$. The dual problem \cite{Boyd2010} of \eqref{eqnprimalrestatedhat} is given by the following unconstrained optimization:
\begin{align}
\label{eqnfirsttermdual}
\widehat{\beta} =  \max_{\lmv} \min_{\rho_{AB}\in \PC }\LC(\rho_{AB}, \lmv)
\end{align}
where $\PC$ is the set of positive semidefinite operators:
\begin{align}
\label{eqnpositiveset23091}
\PC = \{\rho_{AB} \in \HC_{d_A d_B} : \rho_{AB}\geq 0 \}.
\end{align}
Here the Lagrangian is given by
\begin{align}
\label{eqnlagrangian}
\LC(\rho_{AB}, \lmv):= \widehat{\Phi}(\rho_{AB}, Z_A)  +\sum_i \lambda_i [\Tr(\rho_{AB}\Gamma_i ) - \gamma_i] \,,
\end{align}
where the $\lmv = \{\lm_i\}$ are Lagrange multipliers. Strong duality implies that 
\begin{align}
\label{eqnstrongdual}
\widehat{\beta} =\widehat{\al} \,.
\end{align}

In what follows, we go through several steps to simplify the dual problem. It helps to first state the following lemma.
\begin{lemma}\cite{Modi2010,Coles2012}
\label{lemcoherenceopt}
For any $\rho$ and $\Pi=\{ \Pi^j \}$, the coherence can be rewritten as
\begin{align}
\label{eqnrelentropyidentity}
\Phi(\rho, \Pi) =\min_{\omega \in \DC}D\bigg(\rho \bigg|\bigg|  \sum_j \Pi^j \omega \Pi^j \bigg) 
\end{align}
where $\DC$ is the set of density operators.
\end{lemma}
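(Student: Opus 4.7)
The strategy is to establish a Pythagorean-like identity for the relative entropy with respect to the dephasing map $\ZC(\cdot) := \sum_j \Pi^j (\cdot) \Pi^j$, from which the lemma follows at once by non-negativity of the relative entropy. Writing $\sigma_\omega := \ZC(\omega)$ and $\sigma_\rho := \ZC(\rho)$, the target identity is
\[ D(\rho \,\|\, \sigma_\omega) \;=\; D(\rho \,\|\, \sigma_\rho) \,+\, D(\sigma_\rho \,\|\, \sigma_\omega). \]
Once this is proved, Klein's inequality gives $D(\sigma_\rho\,\|\,\sigma_\omega) \geq 0$ with equality iff $\sigma_\omega = \sigma_\rho$, so the minimum in \eqref{eqnrelentropyidentity} equals $D(\rho\,\|\,\sigma_\rho) = \Phi(\rho,\Pi)$ and is attained by the density operator $\omega = \rho \in \DC$.

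To prove the identity, the main step is to exploit the block-diagonal structure of operators in the image of $\ZC$. Using $\Pi^j \Pi^k = \delta_{jk}\Pi^j$, one checks that for any $\omega$ the operator $\sigma_\omega$ is a direct sum of blocks $\Pi^j \omega \Pi^j$ living in the orthogonal ranges of the $\Pi^j$. Consequently $\log_2 \sigma_\omega$, defined on $\supp \sigma_\omega$, shares this block-diagonal form and commutes with every $\Pi^j$. Combined with $\sum_j \Pi^j = \id$ and cyclicity of the trace, this yields the key ``dephasing'' identity $\Tr(\rho \log_2 \sigma_\omega) = \Tr(\sigma_\rho \log_2 \sigma_\omega)$, and the same statement with $\sigma_\omega$ replaced by $\sigma_\rho$. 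Substituting both into the definitions of the three relative entropies above produces the claimed Pythagorean decomposition after a short line of algebra.

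The main subtlety to watch out for is the support condition. If $\supp \rho \not\subseteq \supp \sigma_\omega$ then $D(\rho\,\|\,\sigma_\omega) = +\infty$ and the inequality is trivial, so the argument only needs to cover those $\omega$ for which $\sigma_\omega$ dominates $\rho$; in that case the logarithms are well defined on the relevant subspaces and the block-diagonal identities above must be interpreted on those supports. Apart from this routine bookkeeping, the proof is essentially a direct computation once one recognizes that $\sigma_\rho$ plays the role of an orthogonal ``projection'' of $\rho$ onto the set of $\ZC$-invariant states in the relative-entropy geometry.
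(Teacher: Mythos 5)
Your proof is correct: the Pythagorean identity $D(\rho\|\ZC(\omega)) = D(\rho\|\ZC(\rho)) + D(\ZC(\rho)\|\ZC(\omega))$, which follows from the fact that $\log$ of a block-diagonal (pinching-invariant) operator commutes with the $\Pi^j$, together with Klein's inequality and the attainability at $\omega=\rho$, establishes the lemma, and your handling of the support caveat is adequate. The paper itself offers no proof (it simply cites Refs.~[Modi2010, Coles2012]), and your argument is essentially the standard one found in those references, so there is nothing to add.
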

Hence we have
\begin{align}
\label{eqn23532625}
\widehat{\Phi}(\rho_{AB}, Z_A)&= \min_{\sg_{ A B}\in\DC} \widehat{D}\big(\rho_{A B} || \ZC_A (\sg_{A B} )\big) \,,
\end{align}
where we define the quantum channel $\ZC_A$ whose action on an operator $O$ is given by
\begin{align}
\label{eqnzadef}
\ZC_A(O):=\sum_j Z_{A}^j O Z_{A}^j\,.
\end{align}
Next, we interchange the two minimizations in \eqref{eqnfirsttermdual} 
\begin{align}
\label{eqn2356797}
\min_{\rho_{AB}\in\PC}\min_{\sg_{ A B}\in\DC} f(\rho_{AB}, \sg_{AB}, \vec{\lm}) \notag\\
= \min_{\sg_{AB}\in\DC}\min_{\rho_{ A B}\in\PC} f(\rho_{AB}, \sg_{AB}, \vec{\lm})
\end{align}
where
\begin{align}
\label{eqnfdef1}
f(\rho_{AB}, \sg_{AB}, \vec{\lm}) := \widehat{D}&\big(\rho_{A B} || \ZC_A ( \sg_{A B} )\big)\notag\\
&+\sum_i \lambda_i (\ave{\Gamma_i}-\gamma_i). 
\end{align}

Ref.~\cite{Zorzi2014} solved a relative entropy optimization problem, a special case of which is our problem:
\begin{align}
\label{eqn2356798}
\min_{\rho_{ A B}\in\PC} f(\rho_{AB}, \sg_{AB}, \vec{\lm})\,.
\end{align}
From \cite{Zorzi2014}, the unique solution of \eqref{eqn2356798} is
\begin{align}
\label{eqn2356799}
\rho_{AB}^* = \exp \Big( -\id - \lmv\cdot \Gmv+ \ln \big(\ZC_A ( \sg_{A B} ) \big)\Big)\,.
\end{align}
Inserting \eqref{eqn2356799} into \eqref{eqnfdef1} gives the optimal value:
\begin{align}
\label{eqn23373}
f(\rho_{AB}^*, \sg_{AB}, \lmv) = - \Tr (\rho_{AB}^* )- \sum_i \lambda_i \gamma_i\,.
\end{align}
In summary the dual problem becomes
\begin{align}
\label{eqndualrewritten}
\widehat{\beta} = & \max_{\lmv} \eta(\lmv)\,,
\end{align}
with
\begin{align}
\label{eqnetadef1}
\eta(\lmv):= - \max_{\sg_{AB}\in\DC} \left[ \Tr (\rho_{AB}^* ) + \lmv\cdot \vec{\gamma}\right]\,.
\end{align}

\bigskip

\noindent\textbf{A lower bound.} We can obtain a simple lower bound on $\eta(\lmv)$ as follows. The Golden-Thompson inequality states that
\begin{align}
\label{eqn235632982}
\Tr(\exp(A+B))\leq \Tr(\exp(A)\exp(B)).
\end{align}
Applying this inequality gives:
\begin{align}
\label{eqn235632983}
\Tr (\rho_{AB}^* ) &\leq \Tr \Big(R(\lmv) \exp \big( \ln \ZC_A( \sg_{A B}) \big)\Big)\\
 &= \Tr \Big(R(\lmv)  \ZC_A (\sg_{A B} ) \Big) \\
 &= \Tr \Big( \ZC_A \big(  R(\lmv) \big) \sg_{A B} \Big)\,,
\end{align}
where $R(\lmv) = \exp \big(-\id - \lmv\cdot\Gmv \big)$ was defined in \eqref{eqnmainresult3}. Next, note that
\begin{align}
\label{eqn235632984}
 \max_{\sg_{AB}\in\DC} \Tr \Big( \ZC_A \big(   R(\lmv)   \big) \sg_{A B} \Big)= \Big\| \ZC_A \big(   R(\lmv)   \big) \Big\| \,.
\end{align}
Hence, we arrive at our final result
\begin{align}
\label{eqn235632985}
\widehat{\beta} & \geq  \max_{\lmv}  \left[ -  \Big\| \ZC_A \big(    R(\lmv)   \big) \Big\|  - \lmv\cdot \vec{\gamma} \right] \,,
\end{align}
where the right-hand side is denoted as $\Th$ in Theorem~\ref{thm1}.

\section*{\uppercase{Acknowledgements}}\label{sctackknowledge}

We thank Jie Lin, Adam Winick, and Bailey Gu for technical help with the numerics and for obtaining the data in Fig.~\ref{fgrMDI}. We thank Yanbao Zhang and Saikat Guha for helpful discussions. We acknowledge support from Industry Canada, Sandia National Laboratories, Office of Naval Research (ONR), NSERC Discovery Grant, and Ontario Research Fund (ORF).

\section*{\uppercase{Author Contributions}}\label{sctcontributions}

P.C.\ obtained the conceptional main results. E.M.\ contributed the approach to incorporate post-selection and announcements. N.L.\ conceived and supervised the project. P.C.\ wrote the manuscript with input from E.M.\ and N.L.

\bibliographystyle{naturemag}
\bibliography{uqkd}

\newpage

\onecolumngrid

\section*{\uppercase{Supplementary Note 1: MDI QKD}}\label{appmdi}

In the Results section, we outlined our framework for handling MDI QKD protocols. Here we elaborate on this framework, and we also give more details on the example calculation shown in Fig.~2.

\subsection*{Framework for MDI QKD (continued)}

Our framework considers the tripartite state $\rho_{ABM}$, where $A$ and $B$ are respectively the systems held by Alice and Bob in the source-replacement scheme, and $M$ is the classical register that stores the outcome of the measurement performed by the untrusted node. Let us elaborate on the origin of $\rho_{ABM}$. Recall that, in the source-replacement scheme, Alice prepares a bipartite entangled state of the form
\begin{align}
\label{eqnmdiapp1}
\ket{\psi_{AA'}}= \sum_j \sqrt{p_j}\ket{j}\ket{\phi_j},
\end{align}
and in the MDI scenario, Bob prepares a similar state
\begin{align}
\label{eqnmdiapp2}
\ket{\psi_{BB'}}= \sum_j \sqrt{p_j}\ket{j}\ket{\phi_j}\,.
\end{align}
Hence, the initial state (prior to the action of Eve) is
\begin{align}
\label{eqnmdiapp3}
\rho^{(0)}_{AA'BB'}:= \dya{\psi_{AA'}}\ot \dya{\psi_{BB'}} \,.
\end{align}
For notational convenience, it is helpful to permute the order of the subsystems, as follows
\begin{align}
\label{eqnmdiapp4}
\rhot^{(0)}_{ABA'B'} := \FC \Big(\rho^{(0)}_{AA'BB'} \Big)  \,,
\end{align}
where $\FC$ is the quantum channel that switches the ordering of subsystems $A'$ and $B$. 

Now note that Eve only has access to $A'B'$ and not $AB$. Likewise the untrusted node performs a measurement only on $A'B'$, while $A$ and $B$ remain respectively in Alice's and Bob's laboratories. We combine the action of Eve together with the action of the untrusted measurement, and model it as a single quantum channel $\EC$ that maps $A'B' \to M$, where $M$ is a classical register. That is, we obtain the state
\begin{align}
\label{eqnmdiapp5}
\rho_{ABM} = \Big(\IC_{AB} \ot \EC\Big)\Big(\rhot^{(0)}_{ABA'B'}\Big)  \,,
\end{align}
where $\IC_{AB}$ is the identity channel on $AB$. We apply our numerical approach to the state $\rho_{ABM}$ in Supplementary Eq.~\eqref{eqnmdiapp5}. The beauty of the MDI protocol is that we do not need to consider the process of how we arrived at the state $\rho_{ABM}$, i.e., we do not need to discuss the details of the channel~$\EC$. We only need to specify the experimental constraints on $\rho_{ABM}$, which we stated in the Results section (although we repeat them here for convenience), 
\begin{align}
\label{eqnmdiapp6}
\Tr \left[\rho_{ABM } (\dya{j}\ot \dya{k} \ot \dya{m}  ) \right] = \gamma_{jkm}\,.
\end{align}
In addition, we also enforce constraints that fix the form of the marginal $\rho_{AB}$, which has the form
\begin{align}
\label{eqnmdiapp7}
\rho_{AB} = \rho_A \ot \rho_B = \left( \sum_{j,k} \sqrt{p_j p_k} \ip{\phi_k}{\phi_j}\dyad{j}{k} \right) \ot \left(\sum_{j,k} \sqrt{p_j p_k} \ip{\phi_k}{\phi_j}\dyad{j}{k}\right) 
\,.
\end{align}

One could also add constraints that enforce that $M$ is a classical system. However, we choose not to do this for the following reason. The worst-case scenario, i.e., the scenario that gives Eve the most information, corresponds to $M$ being classical, and hence the key rate is not improved by enforcing the classicality of $M$. We state this in the following lemma.

\begin{lemma}\label{lemmamdi}
Let $\{\ket{m}\}$ be the standard basis for system $M$, and let $\MC$ be the quantum channel that diagonalizes (i.e., decoheres) system $M$ in this basis. That is, $\MC(O) = \sum_m \dya{m} O \dya{m}$ for any operator~$O$. Consider a set of constraints $C$ on $\rho_{ABM}$ and let $\CC$ denote the set of density operators $\rho_{ABM}$ that satisfy $C$. Suppose that the constraints $C$ do not preclude $M$ from being decohered in the standard basis, i.e.,  if $\rho_{ABM} \in \CC$, then $(\IC_{AB}\ot \MC)(\rho_{ABM}) \in \CC$. Define the set
\begin{align}
\label{eqnmdiapp8}
\CC_{\MC}:= \{ \rho_{ABM} \in \CC : (\IC_{AB}\ot \MC)(\rho_{ABM}) = \rho_{ABM}  \}\,.
\end{align}
In other words, $\CC_{\MC} \subseteq \CC$ is the set of states in $\CC$ that are diagonal in the standard basis on $M$. Then, Eve's ignorance about Alice's key is the same regardless of whether we impose that $M$ is decohered in the standard basis, i.e., 
\begin{align}
\label{eqnmdiapp9}
\min_{\rho_{ABM} \in \CC}H(Z_A|E) = \min_{\rho_{ABM} \in \CC_{\MC}}H(Z_A|E)\,.
\end{align}
\end{lemma}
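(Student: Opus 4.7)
One direction is immediate: since $\CC_\MC \subseteq \CC$ by definition, minimizing over the smaller set can only increase the minimum, so $\min_{\rho_{ABM} \in \CC_\MC} H(Z_A|E) \geq \min_{\rho_{ABM} \in \CC} H(Z_A|E)$. The plan is to prove the reverse inequality by showing that for any $\rho_{ABM} \in \CC$, the decohered state $\rho'_{ABM} := (\IC_{AB} \ot \MC)(\rho_{ABM})$ both lies in $\CC_\MC$ and satisfies $H(Z_A|E)_{\rho'} \leq H(Z_A|E)_\rho$. Membership in $\CC_\MC$ is automatic: the hypothesis guarantees $\rho' \in \CC$, and $\rho'$ is fixed by $\IC_{AB}\ot\MC$ by construction.

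To compare the two conditional entropies I would construct Eve's purifications via an explicit Stinespring dilation of $\MC$. Let $V: \HC_M \to \HC_M \ot \HC_{M''}$ be the isometry defined by $V\ket{m}_M = \ket{m}_M\ket{m}_{M''}$, so that $\MC(\sigma) = \Tr_{M''}(V \sigma V\ad)$ for any operator $\sigma$ on $M$. Starting from any purification $\ket{\psi}_{ABME}$ of $\rho_{ABM}$, the extended state $\ket{\psi'}_{ABMEM''} := (V \ot \id_{ABE})\ket{\psi}$ is a purification of $\rho'_{ABM}$. Hence Eve's purifying system for $\rho'$ can be taken to be $E' = EM''$, and $H(Z_A|E)_{\rho'} = H(Z_A|EM'')_{\psi'}$ by invariance of conditional entropy under the choice of purification.

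The argument then reduces to two clean observations. First, the marginal of $\ket{\psi'}$ on $AE$ coincides with that of $\ket{\psi}$: expanding $\ket{\psi} = \sum_m \ket{\xi_m}_{ABE}\ket{m}_M$, the classical copying register $M''$ is perfectly correlated with $M$, so tracing $BMM''$ out of $\ket{\psi'}\bra{\psi'}$ yields the same operator as tracing $BM$ out of $\ket{\psi}\bra{\psi}$. Consequently the $Z_A$--$E$ classical--quantum marginal is identical in the two states, giving $H(Z_A|E)_{\psi'} = H(Z_A|E)_\rho$. Second, strong subadditivity applied to $Z_A$, $E$, $M''$ gives $H(Z_A|EM'')_{\psi'} \leq H(Z_A|E)_{\psi'}$, since conditioning on an additional system never increases conditional entropy. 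Chaining these yields $H(Z_A|E)_{\rho'} \leq H(Z_A|E)_\rho$, and minimizing over $\rho \in \CC$ completes the proof.

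I do not expect any serious technical obstacles here; the only conceptual point to get right is that Stinespring-dilating $\MC$ deposits a perfect copy of $M$ directly into Eve's system, so that strong subadditivity formalizes the intuition ``decohering the announcement register $M$ can only help Eve.'' The closure hypothesis on $\CC$ under $\IC_{AB}\ot\MC$ is essential — without it, $\rho'$ could fall outside the constraint set and the whole argument collapses.
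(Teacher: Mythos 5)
Your proof is correct, but it takes a different route from the paper's. The paper works with the post-measurement pure state $\sigma_{ZZ'ABME}$ obtained from the key-map isometry $V=\sum_j \ket{j}_Z\ot\ket{j}_{Z'}\ot\sqrt{Z_A^j}$, invokes the duality $H(A|B)_\tau=-H(A|C)_\tau$ for tripartite pure states to write $H(Z|E)_\sigma=-H(Z|Z'ABM)_\sigma$, and then applies the data-processing inequality to the channel $\MC$ acting on the conditioning system $M$; this yields $H(Z|E)_\sigma \geq H(Z|\tilde E)_{\tilde\sigma}$ directly. You instead dilate $\MC$ with the copying isometry $V\ket{m}=\ket{m}\ket{m}$, observe that the resulting $\ket{\psi'}$ purifies the decohered state with purifying system $EM''$, check that the $AE$ (hence $Z_A E$) marginal is unchanged, and finish with strong subadditivity $H(Z_A|EM'')\leq H(Z_A|E)$. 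These are the two standard, formally dual ways of proving the same monotonicity: duality plus DPI on the complementary side versus explicit Stinespring dilation plus SSA. Your version is slightly more self-contained — it avoids the entropy duality relation and the key-map isometry, using only invariance of $H(Z_A|E)$ under choice of purification, the fact that the cq state $\rho_{Z_AE}$ depends only on the $AE$ marginal, and SSA — and it makes the intuition explicit that decohering $M$ amounts to handing Eve a perfect copy of the announcement register. The paper's version has the advantage of reusing the same post-measurement-state formalism and isometry that appear elsewhere in its proofs. Both arguments correctly reduce the equality to the trivial inclusion $\CC_\MC\subseteq\CC$ plus the per-state inequality, and your observation that the closure hypothesis on $\CC$ under $\IC_{AB}\ot\MC$ is what places $\rho'$ in $\CC_\MC$ matches the paper's use of that hypothesis.
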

\begin{proof}
For notational simplicity we drop the subscript $A$ from $Z_A$ in what follows. Since $\CC_{\MC} \subseteq \CC$, then we obviously have 
\begin{align}
\label{eqnmdiapp10}
\min_{\rho_{ABM} \in \CC}H(Z|E) \leq \min_{\rho_{ABM} \in \CC_{\MC}}H(Z|E)\,,
\end{align}
so we just need to show the inequality in the opposite direction. In particular we will show that for each state in $\CC$ there is a corresponding state in $\CC_{\MC}$ where Eve's ignorance is lower. Let $\rho_{ABM} \in \CC$, then 
\begin{align}
\label{eqnmdiapp11}
\rhot_{ABM} := (\IC_{AB}\ot \MC)(\rho_{ABM})\in \CC_{\MC}\,.
\end{align}
Let $E$ and $\tilde{E}$ be purifying systems for $\rho_{ABM}$ and $\rhot_{ABM}$, respectively. Then the states
\begin{align}
\label{eqnmdiapp12}
\sigma_{ZZ'ABME} &:= (V\ot \id_{BME})\rho_{ABME}(V\ad\ot \id_{BME})\,,\quad\text{and}\\
 \tilde{\sigma}_{ZZ'ABM\tilde{E}} &:=(V\ot \id_{BM\tilde{E}})\rhot_{ABM\tilde{E}}(V\ad\ot \id_{BM\tilde{E}})
\end{align}
are pure states. Here, $V$ is an isometry that maps $A\to ZZ'A $, defined by
\begin{align}
\label{eqnmdiapp13}
V:=\sum_j \ket{j}_{Z}\ot \ket{j}_{Z'}\ot \sqrt{Z_A^j}\,,
\end{align}
where the set $\{Z_A^j\}$ forms a POVM (Alice's key-map POVM). 

Let us take a moment to clarify the meaning of the conditional entropy $H(Z|E)$. Note that, by convention, when we casually refer to $H(Z |E)$ for the state $\rho_{ABME}$, we precisely mean the conditional von Neumann entropy of the state $\sigma_{ZZ'ABME}$, which we denote $H(Z |E)_{\sigma}$. Typically one refers to $\sigma_{ZZ'ABME}$ as the post-measurement state associated with a given (pre-measurement) state $\rho_{ABME}$. Likewise $H(Z |\tilde{E})$ for the state $\rhot_{ABM\tilde{E}}$ actually refers to the conditional von Neumann entropy of $\tilde{\sigma}_{ZZ'ABM\tilde{E}} $ denoted by $H(Z | \tilde{E})_{\tilde{\sigma}}$. 

The duality \cite{Konig2009} of the von Neumann entropy says that $H(A|B)_{\tau} = - H(A|C)_{\tau}$ for any tripartite pure state $\tau_{ABC}$. Applying this duality relation to the pure state $\sigma_{ZZ'ABME}$ gives
\begin{align}
\label{eqnmdiapp14}
H(Z |E)_{\sigma} &= - H(Z |Z'ABM)_{\sigma} \\
&\geq - H(Z|Z'ABM)_{\tilde{\sigma}}\\
&=H(Z | \tilde{E})_{\tilde{\sigma}} \,,
\end{align}
where the inequality is due to the data-processing inequality, i.e., acting with channel $\MC$ on $M$ can never reduce the entropy. Hence we have shown that Eve's ignorance for the state $\rhot_{ABM\tilde{E}}$ is not larger than her ignorance for the state $\rho_{ABME}$, which is the desired result. 
\end{proof}

\subsection*{Example: MDI QKD with BB84 states}

Here we elaborate on how we obtain the data in Fig.~2. To obtain this data, we consider the most common MDI protocol, where Alice and Bob each prepare and send the BB84 signal states $\{\ket{0},\ket{1},\ket{+},\ket{-}\}$ with probabilities $p_z /2$ and $(1-p_z)/2$ respectively for the $Z$- and $X$-basis states. For simplicity we consider a protocol that does not do sifting and distills key out of both the $Z$- and $X$-bases. This corresponds to choosing the key map as
\begin{align}
\text{Key-map POVM:  }&Z_A = \{\dya{0}+\dya{2},\dya{1}+\dya{3}\}\,, 
\end{align}
where Alice's source-replacement state from Eq.~(12) is 
\begin{align}
\label{eqnsourcereplacemdi}
\ket{\psi_{AA'}}&= \sqrt{p_z /2}\left(\ket{0}\ket{0}+\ket{1}\ket{1}\right)+\sqrt{(1-p_z) /2}\left(\ket{2}\ket{+}+\ket{3}\ket{-}\right).
\end{align}
To obtain large key rates we employ biased basis choices \cite{Lo2004}, i.e., $p_z = 1-\epsilon$ with $0<\epsilon \ll 1$.  As noted above, we impose the correlation constraints in Supplementary Eq.~\eqref{eqnmdiapp6} as well as constraints that fix the form of the marginals $\rho_{A}$ and $\rho_B$, Supplementary Eq.~\eqref{eqnmdiapp7}. It is encouraging that our numerics reproduce the known theoretical curve \cite{Lo2012}, as shown in Fig.~2.

\section*{\uppercase{Supplementary Note 2: Arbitrary post-selection}}\label{apparbitrarypostselection}

In the Results section we discussed a method for transforming the constraints on $\rho_{AB}$ to constraints on the post-selected state $\GC(\rho_{AB})$ for the special case where $\GC$ has a CP inverse. We now generalize the method to transform the constraints for any CP map $\GC$.

The idea is to view the space of Hermitian operators as a vector space, and to partition the space into basis vectors whose coefficients are fixed by the constraints and those whose coefficients are free. Namely, we apply this view to the image space under post-selection, as follows.

By applying the Gram-Schmidt process to the measurement operators $\{\Gamma_i\}$, we can write the constraints on $\rho_{AB}$ equivalently as $\text{Tr}(\rho_{AB} \Delta_i) = \delta_i$ where the $\{\Delta_i\}$ are orthonormal under the Frobenius inner product $\langle A,B\rangle = \text{Tr}(A^\dagger B)$. We extend this to an orthonormal basis $\{\Delta_i\}\cup \{\Xi_j\}$ of the Hermitian operator space. Note that here, and in what follows, we take the basis elements to be Hermitian.

For clarity, in what follows we refer to the observation-based constraints on $\rho_{AB}$, which are generally of the form $\Tr (\rho_{AB} \Gamma_i)= \gamma_i$, as trace constraints. This is to distinguish them from the constraints on $\rho_{AB}$ due to its positivity. Now define $(\vec \delta)_i = \delta_i$ and $(\vec \Delta)_i = \Delta_i$, and similarly for $\vec \xi$ and $\vec \Xi$.  Then any $\rho_{AB}$ satisfying the trace constraints is of the form:
\begin{align}
\rho_{AB} = \vec \delta \cdot \vec \Delta + \vec \xi \cdot \vec \Xi = \rho_0 + \vec \xi \cdot \vec \Xi, 
\end{align}
for any $\vec \xi$. Note that the requirement that $\rho_{AB}\geq 0$ will constrain the possible values of $\vec \xi$, but for now we only consider the trace constraints. For later convenience, we have defined the (not necessarily positive semidefinite) operator $\rho_0 := \vec \delta\cdot \vec \Delta$.

Acting linearly with $\GC$, we find that the image of the set of $\rho_{AB}$ satisfying the trace constraints is
\begin{align}
\GC(\rho_{AB}) = \GC(\rho_0) + \vec \xi \cdot \GC(\vec \Xi).
\end{align}
Now, let $\{\Upsilon_m\}$ be an orthonormal basis for the space spanned by the operators $\{\GC(\Xi_j)\}$, which can again be found by the Gram-Schmidt process. We can extend this to an orthonormal basis $\{\Upsilon_m\}\cup \{\Omega_n\}$ of the image of the original Hilbert space under $\GC$, i.e., $\GC (\HC_{AB})$. In practice, this can be done by performing Gram-Schmidt again on $\{\Upsilon_m\}\cup \{\GC(\Delta_i)\}$, which will leave the $\Upsilon_m$ unchanged. The $\{\Omega_n\}$ are the operators of interest. We can calculate the coefficients $\omega_n$:
\begin{align}
\text{Tr}(\GC(\rho_{0})\Omega_n) = \omega_n.
\label{newconstraints}
\end{align}
With the basis decomposition $\GC(\rho_{AB}) = \vec \omega\cdot \vec \Omega + \vec \upsilon \cdot \vec \Upsilon$ for some $\vec \upsilon$, it can be shown that the trace constraints on $\rho_{AB}$ do not constrain $\vec \upsilon$ whatsoever. On the other hand, the coefficients in $\vec \omega$ are exactly determined in Supplementary Eq.~\eqref{newconstraints}. Thus, the trace constraints are exactly converted from $\rho_{AB}$ to $\GC(\rho_{AB})$ according to:
\begin{align}
\{\text{Tr}(\rho_{AB} \Gamma_i) = \gamma_i\}\Rightarrow \{\text{Tr}(\GC(\rho_{AB})\Omega_n) = \omega_n\}.\label{constraintmap}
\end{align}

Let us remark that the Hermitian operators in $\GC( \HC_{AB})$ will be represented as matrices in a possibly larger space $\tilde{\HC}_{AB} $, where $\GC( \HC_{AB}) \subseteq \tilde{\HC}_{AB} $. One has the freedom to choose $\tilde{\HC}_{AB} $ for a convenient matrix representation of $\GC (\HC_{AB})$, and hence $\tilde{\HC}_{AB} $ is not unique. But if $\tilde{\HC}_{AB} $ is strictly larger, i.e., $\GC( \HC_{AB}) \subset \tilde{\HC}_{AB} $, then we must enforce additional trace constraints, to restrict the optimization to $\mathcal G(\mathcal H_{AB})$. To obtain these additional constraints, complete the orthonormal basis $\{\Upsilon_m\}\cup\{\Omega_n\}$ of $\mathcal G(\mathcal H_{AB})$ to a basis of $\tilde{\HC}_{AB} $ with the additional orthonormal Hermitian operators $\{\Lambda_\ell\}$. The operators $O \in \tilde{\HC}_{AB} $ such that $O\in\mathcal G(\mathcal H_{AB})$ are exactly those that satisfy $\text{Tr}(O \Lambda_\ell) = 0$ for each $\ell$. Hence, one can add the constraints $\{\Tr (\GC(\rho_{AB})\Lambda_\ell ) = 0\}$ to the set in Supplementary Eq.~\eqref{constraintmap}. For future convenience, let us define the set of positive operators in $\tilde{\HC}_{AB} $ as $\tilde{\PC}_{AB}  := \{\tilde{\rho}_{AB} \in \tilde{\HC}_{AB} : \tilde{\rho}_{AB} \geq 0\}$.

With post-selection, the key rate formula in Eq.~$\devwin$ is applied to the post-selected state $\GC(\rho_{AB})/p_\text{pass}$, with the measured quantity $p_\text{pass} = \text{Tr}(\GC(\rho_{AB}))$. (See the remark in the main text where we note that taking Eve's system to purify the post-selected state does not introduce any looseness into our key rate calculation.) In the usual primal optimization, the optimization is taken over $\rho_{AB}$. However, we can directly reformulate it as an optimization over $\tilde \rho_{AB}$ in $\GC(\mathcal P_{AB})$, the image of $\PC_{AB}$ under the post-selection map. Let $\tilde{\mathcal B}$ be the set of $\tilde\rho_{AB}\in \GC(\mathcal P_{AB})$ satisfying the trace constraints $\text{Tr}(\tilde\rho_{AB} \Omega_n) = \omega_n$ for each $n$. Then the primal problem is:
\begin{align}
\widehat\alpha &= \min_{\rho_{AB}\in\mathcal C}\widehat \Phi\left(\frac{\GC(\rho_{AB})}{p_\text{pass}}, Z_A\right) \\
&=  \min_{\rho_{AB}\in\mathcal C} \min_{\sg_{ A B}\in\DC} \widehat{D}\bigg(\frac{\GC(\rho_{A B})}{p_\text{pass}}  \bigg|  \bigg| \ZC_A \big( \sigma_{AB} \big)\bigg) \\
&= \frac{\widetilde{\alpha}}{p_\text{pass}} - \ln p_\text{pass}\,,
\end{align}
where
\begin{align}
\widetilde{\alpha}:= \min_{\tilde\rho_{AB}\in\tilde{\mathcal B}}     \min_{\sg_{ A B}\in\DC} \widehat{D}\big(\tilde{\rho}_{A B} || \ZC_A (\sg_{A B} )\big) \,.
\end{align}

To apply our standard optimization algorithm, we need to optimize over a set of all positive semidefinite operators in a Hilbert space. Since $\GC$ is a CP map, $\GC(\mathcal P_{AB}) \subseteq \GC(\mathcal H_{AB})_+$, where $\GC(\mathcal H_{AB})_+$ is the set of positive semidefinite operators in $\GC(\mathcal H_{AB})$. The inclusion need not be with equality, so we have the inequality:
\begin{align}
\widetilde{\alpha} \ge  \min_{\tilde\rho_{AB}\in\tilde{\mathcal C}}   \min_{\sg_{ A B}\in\DC} \widehat{D}\big(\tilde{\rho}_{A B} || \ZC_A (\sg_{A B} )\big) ,\label{opreform}
\end{align}
where $\tilde{\mathcal C}$ is the set of $\tilde\rho_{AB}\in \GC(\mathcal H_{AB})_+$ such that $\text{Tr}(\tilde\rho_{AB} \Omega_n) = \omega_n$ for each $n$, or equivalently, the set of $\tilde\rho_{AB}\in \tilde{\PC}_{AB}$ such that $\text{Tr}(\tilde\rho_{AB} \Omega_n) = \omega_n$ for each $n$ and $\text{Tr}(\tilde\rho_{AB} \Lambda_\ell) = 0$ for each $\ell$. With the reformulation of the optimization problem in Supplementary Eq.~\eqref{opreform}, we have (at the expense of introducing an inequality) recast the optimization with post-selection into the usual form treated in the Methods section.

However, note that when $\GC$ has an inverse $\GC^{-1}$ that is CP, Supplementary Eq.~\eqref{opreform} is satisfied with equality. This follows from the fact that $\GC(\mathcal P_{AB}) = \GC(\mathcal H_{AB})_+$ in this case. This special case was discussed in the Results section. Furthermore, we note that the B92 protocol (see Supplementary Note~4) involves a post-selection map that has a CP inverse. So for that protocol, the step in Supplementary Eq.~\eqref{opreform} does not introduce any looseness.

\section*{\uppercase{Supplementary Note 3: Tightness for protocols with MUBs}}\label{apptightness}

Here we analytically prove Prop.~2. This states that our numerical approach is perfectly tight for the entanglement-based protocols involving MUBs discussed in the main text. 

First we note that the only potential source of looseness in our bound is our usage in Eq.~$\gtapplied$ of the Golden-Thompson (GT) inequality Eq.~$\gt$.   The question, then, is under what conditions is Eq.~$\gtapplied$  saturated.

\subsection*{A general lemma}

We begin by stating a general lemma, which gives a sufficient set of criteria that guarantee our method is tight. Note that these sufficient criteria might not be necessary for tightness.

\begin{lemma}\label{gtsatcon}
The GT inequality invoked in Eq.~$\gtapplied$ is saturated, and hence our method tight, for a QKD protocol satisfying the following two conditions:
\begin{enumerate}[label=(\alph*)]
\item $[\Gamma_{i},\Gamma_{i'} ]=0\,\,\, \forall\, i,i'$
\item $\bra{e_{\ell} } \ZC_A (\ket{e_k}\bra{e_k}) \ket{ e_{\ell '}}=0$ for $\ell \neq\ell '$ and $\forall k$ in a common eigenbasis $\{\ket{e_k}\}$ of all $\{\Gamma_i\}$.
\end{enumerate}
\end{lemma}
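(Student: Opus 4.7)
The plan is to observe that the only inequality used to pass from the dual $\widehat\beta$ to the maximization defining $\Th$ in the Methods section is the Golden--Thompson (GT) step at Eq.~\gtapplied; every other manipulation (the Zorzi closed form, the self-adjointness of $\ZC_A$, and so on) is an equality. Since strong duality $\widehat\al = \widehat\beta$ has already been established in Supplementary Note~6, it suffices to exhibit, for every $\lmv$, a density operator $\sg^\star\in\DC$ that simultaneously (i) saturates GT, and (ii) attains the upper bound $\|\ZC_A(R(\lmv))\|$. That will force $\eta(\lmv) = -\|\ZC_A(R(\lmv))\| - \lmv\cdot\vec\gamma$ exactly, so maximizing over $\lmv$ yields $\widehat\beta = \Th\ln 2$, hence $\widehat\al = \Th\ln 2$ and the primal and dual objectives coincide.

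The construction of $\sg^\star$ is forced by (a) and (b). Condition (a) supplies a common eigenbasis $\{\ket{e_k}\}$ of the $\{\Gamma_i\}$, in which both $\lmv\cdot\Gmv$ and $R(\lmv) = \exp(-\id - \lmv\cdot\Gmv)$ are diagonal, with eigenvalues $r_k = \exp(-1 - \sum_i \lm_i\,\bra{e_k}\Gamma_i\ket{e_k})$. Condition (b) then asserts that each $\ZC_A(\dya{e_k})$ is also diagonal in $\{\ket{e_k}\}$, so by linearity $\ZC_A(R(\lmv)) = \sum_k r_k\,\ZC_A(\dya{e_k})$ is diagonal in $\{\ket{e_k}\}$ too; its supremum norm is therefore just its largest diagonal entry, attained at some index $\ell^\star$. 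I propose to take $\sg^\star := \dya{e_{\ell^\star}} \in \DC$.

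By (b) applied at $k=\ell^\star$, $\ZC_A(\sg^\star)$ is diagonal in $\{\ket{e_k}\}$ and hence commutes with $\lmv\cdot\Gmv$; this is precisely the commutation hypothesis that turns GT into an equality. Evaluating the chain then gives $\Tr(\rho_{AB}^\ast)|_{\sg = \sg^\star} = \Tr(R(\lmv)\,\ZC_A(\sg^\star)) = \Tr(\ZC_A(R(\lmv))\,\sg^\star) = \bra{e_{\ell^\star}}\ZC_A(R(\lmv))\ket{e_{\ell^\star}} = \|\ZC_A(R(\lmv))\|$, where I have used self-adjointness of $\ZC_A$ (immediate for projective $Z_A$) in the second step and diagonality from (b) in the last. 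The generic GT bound supplies the reverse inequality $\max_{\sg\in\DC}\Tr(\rho_{AB}^\ast) \le \|\ZC_A(R(\lmv))\|$ for every $\lmv$, so $\sg^\star$ actually attains the maximum, the GT inequality in Eq.~\gtapplied is saturated, and the lemma follows.

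The main obstacle will be a technical rank issue: $\sg^\star$ is pure, so $\ZC_A(\sg^\star)$ is typically not full rank and $\ln\ZC_A(\sg^\star)$, which appears inside the closed-form minimizer $\rho_{AB}^\ast$ from Ref.~\cite{Zorzi2014}, has $-\infty$ eigenvalues on its kernel. I would address this either by restricting all operator manipulations to $\supp(\ZC_A(\sg^\star))$, or by a perturbative argument in which $\sg^\star$ is replaced by $\sg^\star_\ep := (1-\ep)\,\sg^\star + \ep\,\id/(d_Ad_B)$ and one takes $\ep\downarrow 0$ at the end, using continuity of the trace and of the supremum norm to carry GT saturation to the limit. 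Everything else should be routine.
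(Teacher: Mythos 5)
Your proposal is correct and follows essentially the same route as the paper's proof: condition (a) supplies a common eigenbasis diagonalizing $Q(\lmv)$ and $R(\lmv)$, condition (b) makes $\ZC_A(R(\lmv))$ diagonal in that basis, and choosing the optimal $\sg$ to be the basis projector $\dya{e_{\ell^\star}}$ at the maximal eigenvalue yields $[\,Q(\lmv),\ZC_A(\sg^\star)\,]=0$ and hence saturation of the Golden--Thompson step. Your explicit handling of the rank-deficiency of $\ln\ZC_A(\sg^\star)$ (support restriction or an $\ep$-regularization) is a sound way to deal with a point the paper treats only implicitly.
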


\begin{proof}
In general, the GT inequality Eq.~$\gt$ is satisfied with equality if and only if the two operators commute. In our case, the saturation of the GT inequality is equivalent to the vanishing of the following commutator
\begin{align}\left[Q(\lmv),\ln \ZC_A(\sigma_{AB}^*)\right]\,,
\label{commutatorGT1}\end{align}
where $Q(\lmv):= -\id - \lmv\cdot \Gmv$, and where $\sigma_{AB}^*$ is a maximal eigenvector of
\begin{align}
T:=\ZC_A \big(\exp( Q(\lmv) ) \big) = \ZC_A \big(R(\lmv) \big)\,,
\end{align}
i.e., an eigenvector of $T$ whose eigenvalue is the largest. In general $\sigma_{AB}^*$ is not uniquely defined if the maximal eigenvalue is degenerate. However, this issue does not affect the proof below. This is because, for tightness, we only need the GT inequality to be saturated for one particular $\sigma_{AB}^*$, i.e., one particular $\sigma_{AB}$ that achieves the optimization in Eq.~$\sigmaoptimal$.

As $\ZC_A(\sigma_{AB}^*)$ is positive semidefinite, it can be shown that the vanishing of Supplementary Eq.~\eqref{commutatorGT1}, and thus the saturation of the GT inequality, is equivalent to the vanishing of
\begin{align}\left[Q(\lmv),\ZC_A(\sigma_{AB}^*)\right].\label{commutatorGT2}\end{align}
This follows from the fact that $\ln \ZC_A(\sigma_{AB}^*)$ and $\ZC_A(\sigma_{AB}^*) = \exp (\ln \ZC_A(\sigma_{AB}^*))$ are diagonal in the same basis.

Now suppose that conditions (a) and (b) are satisfied. It follows from (a) that the measurement operators $\{\Gamma_i\}$ can be simultaneously diagonalized in an orthonormal eigenbasis $\{\ket{e_k}\}$. The operators $Q(\lmv)$ and $R(\lmv) = \exp(Q(\lmv))$ are also diagonal in such a basis.

From condition (b), we note that $\ZC_A$ maps an eigenstate $\ket{e_k}\bra{e_k}$ to a linear combination of $\dya{e_{\ell}}$ terms. Let the coefficients of that combination be $b_{k \ell}$ and let the eigenvalues of $R(\lmv)$ be $a_k$. Then $T$ is also diagonalizable in the $\{\ket{e_k}\}$ eigenbasis since

\begin{align}
T &=\sum_k a_k \ZC_A(\ket{e_k}\bra{e_k}) =  \sum_{\ell} \left(\sum_k b_{k \ell} a_k \right)\dya{e_{\ell}}.
\end{align}

Since $\sigma_{AB}^*$ is a maximal eigenvector of $T$, and $T$ is diagonal in the $\{\ket{e_k}\}$ basis, then let us choose $\sigma_{AB}^* = \dya{e_m}$ to correspond to a state $\ket{e_m}$ from this basis. While $T$ may have more than one eigenbasis, we remark that we have the freedom to choose $\sigma_{AB}^*$ from the $\{\ket{e_k}\}$ basis, since (as noted above) we only need the GT inequality to be saturated for a particular choice of $\sigma_{AB}^*$.

We find that Supplementary Eq.~\eqref{commutatorGT2} vanishes:
\begin{align}
\left[Q(\vec \lambda),\ZC_A(\sigma_{AB}^*)\right] & = \left[Q(\vec\lambda),\sum_{\ell}b_{m \ell}\dya{e_{\ell}}\right] =\sum_{\ell} b_{m \ell}\left[Q(\lmv),  \dya{e_{\ell}}   \right] = 0\,,
\end{align}
and thus the GT inequality is saturated if conditions (a) and (b) are satisfied.
\end{proof}

\subsection*{Specific protocols}

We now show that conditions (a) and (b) in Supplementary Lemma~\ref{gtsatcon} are satisfied for the protocols involving MUBs in the main text.

First we define some notation. The generalized Pauli operators in dimension $d$ are
\begin{align}
\label{eqnpauliopsz}
 \sigma_Z &:= \sum_j \omega^{j} \dya{j}\\
\label{eqnpauliopsx}
 \sigma_X &:= \sum_j \dyad{j+1}{j} = F \sigma_Z F\ad\,,
\end{align}
with $\omega = e^{2\pi i/ d}$. From these operators one can construct the Bell basis states $\{\ket{\phi_{q,r}}\}$, i.e., a set of $d^2$ orthonormal states of the form
\begin{align}
\label{eqnbellbasis}
\ket{\phi_{q,r}} := \id \ot \sigma_X^q \sigma_Z^r \ket{\phi_{0,0}},\quad\text{with }q,r \in \{0,...,d-1\}\,,
\end{align}
where 
\begin{align}
\label{eqnbellbasis2}
\ket{\phi_{0,0}} := \sum_j \frac{1}{\sqrt{d}}\ket{j}\ket{j}\,.
\end{align}

Our proof of tightness will proceed by showing that the $\Gamma_i$ operators of interest are all diagonal in the Bell basis (Supplementary Eq.~\eqref{eqnbellbasis}), and furthermore that the Bell basis satisfies condition (b) in Supplementary Lemma~\ref{gtsatcon}. Let us first show the latter, since it will be used repeatedly below.

\begin{lemma}\label{propbellbasis}
The Bell basis $\{\ket{\phi_{q,r}}\}$ satisfies condition (b) in Supplementary Lemma~\ref{gtsatcon}. That is, 
\begin{align}
\label{eqnbellbasis346}
\mted{\phi_{\ell,m}}{\ZC_A(\dya{\phi_{q,r}})}{\phi_{\ell ' , m'}} = 0,\quad  \forall (\ell,m) \neq (\ell ' , m')\text{ and } \forall (q,r)\,.
\end{align}
\end{lemma}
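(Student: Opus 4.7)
The plan is to compute the left-hand side directly by substituting the explicit form of the Bell states and exploiting the fact that, for the protocols in question, Alice's key-map POVM consists of projectors onto the standard basis: $Z_A^j = \ketbraq{j}$. Then $\ZC_A$ is just full dephasing on system $A$ in the standard basis, which acts trivially on any operator supported on system $B$. This should reduce the argument to a short index manipulation.

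First, I would use the explicit expansion
\begin{equation}
\ket{\phi_{q,r}} = \frac{1}{\sqrt{d}}\sum_k \omega^{rk}\ket{k}_A\ket{k+q}_B
\end{equation}
(derived from $\ket{\phi_{q,r}}=(I\otimes \sg_X^q \sg_Z^r)\ket{\phi_{0,0}}$ and the definitions \eqref{eqnpauliopsz}--\eqref{eqnpauliopsx}). Applying $\ZC_A$ annihilates the $k\neq k'$ off-diagonal terms, leaving
\begin{equation}
\ZC_A(\dya{\phi_{q,r}}) = \frac{1}{d}\sum_j \dya{j}_A \otimes \dya{j+q}_B ,
\end{equation}
which is independent of $r$. (Equivalently, one can note that $\ZC_A$ commutes with operators of the form $I\otimes B$, push the Pauli string $I\otimes \sg_X^q\sg_Z^r$ outside, and observe that the resulting $\sg_Z^r$ phases cancel.)

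Second, I would sandwich this diagonal operator between $\bra{\phi_{\ell,m}}$ and $\ket{\phi_{\ell',m'}}$. Each overlap $\ip{\phi_{\ell,m}}{j,j+q}$ reduces via the delta functions $\langle k|j\rangle\langle k+\ell|j+q\rangle$ to $\frac{1}{\sqrt d}\omega^{-mj}\delta_{\ell,q}$; the analogous computation on the other side yields $\frac{1}{\sqrt d}\omega^{m'j}\delta_{\ell',q}$. Summing on $j$ produces a discrete Fourier sum $\sum_j \omega^{(m'-m)j} = d\,\delta_{m,m'}$, so
\begin{equation}
\mted{\phi_{\ell,m}}{\ZC_A(\dya{\phi_{q,r}})}{\phi_{\ell',m'}} = \frac{1}{d}\,\delta_{\ell,q}\,\delta_{\ell',q}\,\delta_{m,m'} .
\end{equation}
This vanishes whenever $(\ell,m)\neq(\ell',m')$: either $\ell\neq\ell'$, which forbids both $\delta_{\ell,q}$ and $\delta_{\ell',q}$ from being simultaneously $1$, or $\ell=\ell'$ but $m\neq m'$, which kills $\delta_{m,m'}$.

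There is no real obstacle here; the lemma is essentially a one-line bookkeeping check once one recognizes that $\ZC_A$ is a standard-basis dephasing channel on $A$ and that Bell states become perfectly classically correlated after such dephasing. The only care needed is in the index arithmetic modulo $d$ (the indices $k+q$, $k+\ell$, etc.\ are understood mod $d$) and in handling the degenerate case $\ell=\ell'=q$, where the nonzero value $\tfrac{1}{d}\delta_{m,m'}$ appears. Packaging these observations cleanly yields the claim.
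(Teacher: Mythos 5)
Your proof is correct; the index bookkeeping checks out (the explicit expansion $\ket{\phi_{q,r}}=\tfrac{1}{\sqrt d}\sum_k\om^{rk}\ket{k}\ket{k+q}$ is right, and the final value $\tfrac1d\,\delta_{\ell,q}\delta_{\ell',q}\delta_{m,m'}$ is consistent with what the paper obtains). Your route is more elementary and coordinate-based than the paper's: the paper never computes Bell-basis matrix elements. Instead it rewrites the dephasing channel as a uniform $\sg_Z$-twirl, $\ZC_A(O)=\tfrac1d\sum_j\sg_Z^j O(\sg_Z^j)\ad$, uses the ricochet identity $(O\ot\id)\ket{\phi_{0,0}}=(\id\ot O^T)\ket{\phi_{0,0}}$ to push the phases onto Bob's side, and concludes the structural identity $\ZC_A(\dya{\phi_{q,r}})=\tfrac1d\sum_j\dya{\phi_{q,r+j}}$, from which diagonality in the Bell basis is immediate. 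Your intermediate operator $\tfrac1d\sum_j\dya{j}\ot\dya{j+q}$ is in fact the same object written in the product basis, so the two proofs establish the same fact; what the paper's operator-level argument buys is that the output is manifestly a mixture of Bell projectors without any Fourier sums, and the twirl/ricochet manipulations are reused almost verbatim in the subsequent tightness computations for $C_X$ and $C_{XZ^k}$, whereas your direct computation buys transparency and requires nothing beyond delta-function arithmetic modulo $d$. Either argument suffices for the lemma as stated.
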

\begin{proof}
In all the protocols under consideration,
\begin{align}
\label{eqnbellbasis347}
Z_A = \{\dya{j}\}_{j=0}^{d-1}
\end{align}
is taken to be the standard basis on system $A$. Hence we can rewrite the action of the channel $\ZC_A$ on some operator $O$ as
\begin{align}
\label{eqnbellbasis348}
\ZC_A(O) &= \sum_j \dya{j} O \dya{j}\\
 &= \frac{1}{d}\sum_{j,k,k'} \omega^{j(k-k')}\dya{k} O \dya{k'}\\
 &= \frac{1}{d}\sum_{j} \sigma_Z^j  O (\sigma_Z^j)\ad\,.
\end{align}
Next note that $\ket{\phi_{0,0}}$ has the property
\begin{align}
\label{eqnbellbasis350}
(O\ot \id ) \ket{\phi_{0,0}} = (\id \ot O^T ) \ket{\phi_{0,0}}
\end{align}
for some operator $O$, where $^T$ is the transpose in the standard basis. Hence we have
\begin{align}
\label{eqnbellbasis348}
\ZC_A(\dya{\phi_{0,0}}) &= \frac{1}{d}\sum_{j} (\sigma_Z^j \ot \id) \dya{\phi_{0,0}} ((\sigma_Z^j)\ad \ot \id)\\
\label{eqnbellbasis348b}
&= \frac{1}{d}\sum_{j} (\id \ot \sigma_Z^j ) \dya{\phi_{0,0}} (\id \ot (\sigma_Z^j)\ad )   \,.
\end{align}
Finally, using the definition in Supplementary Eq.~\eqref{eqnbellbasis}, we have 
\begin{align}
\label{eqnbellbasis348}
\ZC_A(\dya{\phi_{q,r}}) &= (\id \ot \sigma_X^q \sigma_Z^r) \ZC_A(\dya{\phi_{0,0}})(\id \ot \sigma_X^q \sigma_Z^r)\ad\\
&= \frac{1}{d}\sum_{j} (\id \ot \sigma_X^q \sigma_Z^{r+j }) \dya{\phi_{0,0}} (\id \ot \sigma_X^q \sigma_Z^{r+j })\ad   \\
\label{eqnbellbasis349}&= \frac{1}{d}\sum_{j} \dya{\phi_{q,r+j}}  \,.
\end{align}
Clearly Supplementary Eq.~\eqref{eqnbellbasis349} is diagonal in the Bell basis, proving the desired result.
\end{proof}
Therefore, in the specific protocols considered below, we only need to show that the $\{\Gamma_i\}$ operators are diagonal in the Bell basis, to prove tightness of our method.

\subsubsection*{Two MUBs}

First let us consider the protocol discussed in the main text involving only two MUBs in arbitrary dimension $d$. Here the $\{\Gamma_i\}$ operators are $\{\id, E_Z, E_X\}$, where we write the error operators as $E_Z = \id - C_Z$ and $E_X = \id - C_X$, with
\begin{align}
\label{eqnEZdefarbd}
C_Z &:=  \sum_j \dya{j} \ot \dya{j} \\
\label{eqnEXdefarbd}
C_X &:=  \sum_j F\dya{j} F\ad \ot F\ad\dya{j} F.
\end{align}

It suffices to show that $C_Z$ and $C_X$ are diagonal in the Bell basis. First, note that $C_Z$ is $d$ times the quantity in Supplementary Eq.~\eqref{eqnbellbasis348b}, and hence
\begin{align}
\label{eqnCZ}
C_Z &=  \sum_{r=0}^{d-1} \dya{\phi_{0,r}},
\end{align}
which is obviously diagonal in the Bell basis. Next we write
\begin{align}
\label{eqnCZ}
C_X &=  (F \ot F\ad)C_Z(F\ad \ot F)\\
 &=    \sum_r (F \ot F\ad) \dya{\phi_{0,r}} (F\ad \ot F)\\
 &=    \sum_r (F \ot F\ad \sg_Z^r) \dya{\phi_{0,0}} (F\ad \ot (\sg_Z^r)\ad F )\\
 &=    \sum_r (\id \ot F\ad \sg_Z^r F) \dya{\phi_{0,0}} (\id \ot F\ad(\sg_Z^r)\ad F )\\
\label{eqnCZ4}
 &=    \sum_r (\id \ot \sg_X^{d-r}) \dya{\phi_{0,0}} (\id \ot ( \sg_X^{d-r})\ad )\\
 &=    \sum_r  \dya{\phi_{r,0}} \,,
\end{align}
where Supplementary Eq.~\eqref{eqnCZ4} used the relation
\begin{align}
\label{eqnCZ8}
F\ad \sg_Z^r F= (F \sg_Z^r F\ad )^T= (\sg_X^{r})^T =  \sg_X^{d-r} \,.
\end{align}
Clearly the final expression for $C_X$ is diagonal in the Bell basis. This proves that our numerical approach is tight for the protocol discussed in Fig.~3 of the main text.

\subsubsection*{Six-state protocol}

The six-state protocol (see Fig.~1) is a qubit protocol involving the operators $\{\Gamma_i\}=\{\id, E_Z, E_X+E_Y\}$. We already showed that $E_Z$ and $E_X$ are diagonal in the Bell basis, so we just need to do the same for $E_Y$. Note that we can write $E_Y = (\id - \sigma_Y \ot \sigma_Y)/2$, where $\sigma_Y := -i\dyad{0}{1}+i\dyad{1}{0}$. So it suffices to show that $\sigma_Y \ot \sigma_Y$ is diagonal in the Bell basis. This follows from directly computing the action on the four Bell states:
\begin{align}
\label{eqnsixstatebell}
\sigma_Y \ot \sigma_Y ( \ket{00}\pm \ket{11}) &= -( \ket{11}\pm \ket{00})\\
\sigma_Y \ot \sigma_Y ( \ket{01}\pm \ket{10}) &= ( \ket{10}\pm \ket{01})\,.
\end{align}
Hence our method is tight for the six-state protocol.

\subsubsection*{$n$ MUBs}

The protocol considered in Fig.~4 involved $n$ MUBs in $d=5$. These MUBs were chosen based on a construction in Ref.~\cite{Bandyopadhyay2001}. Namely, in prime dimension, the eigenvectors of the operators
\begin{align}
\label{eqnmubprimed}
\sigma_Z, \sigma_X, \sigma_X \sigma_Z, ...,  \sigma_X \sigma_Z^{d-1}
\end{align}
form a set of $d+1$ MUBs. In Fig.~4, we considered a subset of size $n$ of the MUBs in Supplementary Eq.~\eqref{eqnmubprimed}.

The measurement operators are:
\begin{align}
\label{eqnmubprimed2}
\{\Gamma_i\}  = \{\id, E_Z, E_X + E_{XZ}+ ...  + E_{XZ^{n-2}} \}
\end{align}
where $E_{XZ^k}$ denotes the error operator for the basis associated with $\sigma_X \sigma_Z^{k}$.  

We already showed above that $E_Z$ and $E_X$ are diagonal in the Bell basis, so it remains to show this for $E_{XZ}, ... ,E_{XZ^{n-2}}$. Again let us use the notation
\begin{align}
\label{eqnmubprimed3}
C_{XZ^k}&:= \id -E_{XZ^k} \\
&= (H_k \ot H_k^*) C_Z (H_k \ot H_k^*)\ad \,,
\end{align}
where $H_k$ is the Hadamard (unitary) matrix that rotates the standard basis to the eigenbasis of $\sigma_X \sigma_Z^{k}$, and $H_k^*$ denotes its conjugate in the standard basis.

Consider the case where $d$ is an odd prime. Note that the only even prime is $d=2$ which we already covered above. We restrict to odd primes here, since the following construction applies to them
\begin{align}
\label{eqnmubprimed4}
H_k = \sum_{j,j'} \frac{1}{\sqrt{d}} \omega^{-jj' - k s_j}\dyad{j}{j'}
\end{align}
where $s_j := (d-j)(d+j-1)/2$.

Proceeding similarly to Supplementary Eq.~\eqref{eqnCZ}, we write
\begin{align}
\label{eqnCxzk}
C_{XZ^k} & =  \sum_r (H_k \ot H_k^*) \dya{\phi_{0,r}}  (H_k \ot H_k^*)\ad \\
 &=    \sum_r (H_k \ot H_k^* \sg_Z^r) \dya{\phi_{0,0}} (H_k\ad \ot (\sg_Z^r)\ad H_k^T )\\
 &=    \sum_r (\id \ot H_k^* \sg_Z^r H_k^T) \dya{\phi_{0,0}} (\id \ot H_k^* (\sg_Z^r)\ad H_k^T )\\
\label{eqnCxzk2}
 &=    \sum_r (\id \ot \sg_X^{d-r}\sg_Z^{kr}) \dya{\phi_{0,0}} (\id \ot ( \sg_X^{d-r} \sg_Z^{kr})\ad )\\
 &=    \sum_r  \dya{\phi_{d-r,kr}} \,,
\end{align}
which is diagonal in the Bell basis. In Supplementary Eq.~\eqref{eqnCxzk2}, we used
\begin{align}
\label{eqnmubprimed5}
H_k^* \sg_Z^r H_k^T = \omega^{-kr(r+1)/2} \sg_X^{d-r}\sg_Z^{kr},
\end{align}
and noted that the phase factor $\omega^{-kr(r+1)/2}$ disappears when multiplied by its conjugate.

\section*{\uppercase{Supplementary Note 4: Analysis of B92 protocol}}\label{appb92}

Here we elaborate on our analysis of the B92 protocol. Recall that Alice sends one of two non-orthogonal states $\{\ket{\phi_0} , \ket{\phi_1}\}$ to Bob, and Bob randomly measures either in basis $B_0 = \{\ket{\phi_0},\ket{\overline{\phi_0}}\}$ or basis $B_1 = \{\ket{\phi_1},\ket{\overline{\phi_1}}\}$, where $\ip{\phi_0}{\overline{\phi_0}}=\ip{\phi_1}{\overline{\phi_1}}=0$. They post-select on rounds where Bob gets outcome $\ket{\overline{\phi_0}}$ or $\ket{\overline{\phi_1}}$.

Since this is a prepare-and-measure protocol, we use the source-replacement scheme as outlined in the Results section. That is, Alice and Bob obtain constraints on the state $\rho_{AB}$ in Eq.~(13). The optimization problem is then defined as follows
\begin{align}
\label{eqnb92constraintsA}\text{Key-map POVM:  }&Z_A = \{\dya{0},\dya{1}\}  \\
\label{eqnb92constraintsB}\text{Constraints:  }&\ave{\id} = 1 \\
\label{eqnb92constraintsC}&\ave{\Gamma_1} = p/2\\
\label{eqnb92constraintsD}&\ave{\Gamma_2} =p/2+ (1-p)\sin^2(\theta/2)\\
\label{eqnb92constraintsE}&\ave{\sigma_X \ot \id} = \cos (\theta/2)\\
\label{eqnpostselectb92}\text{Post-selection:  }
&G = \id_A \ot \left[\frac{1}{2}\big(\dya{\overline{\phi_0}}+\dya{\overline{\phi_1}}\big)\right]^{1/2}\,.
\end{align}
Here, $p$ is the depolarizing probability, $\sigma_X = \dyad{0}{1}+\dyad{1}{0}$, and the error and success operators are respectively
\begin{align}
\Gamma_1 &:= \dya{0}\ot \dya{\overline{\phi_0}}+\dya{1}\ot \dya{\overline{\phi_1}}\\
\Gamma_2 &:= \dya{0}\ot \dya{\overline{\phi_1}}+\dya{1}\ot \dya{\overline{\phi_0}}\,.
\end{align}
Note that the constraint in Supplementary Eq.~\eqref{eqnb92constraintsE} serves to constrain $\rho_A$ and is of the form of Eq.~(15) (see discussion around Eq.~(14)). In principle one can add additional constraints on $\rho_A$, although we found this did not affect the key rate. In addition to the key map and constraints, note that we also needed to define the filter associated with Bob's post-selection; see Eq.~(18) in the Results section. Here we wrote the post-selection map in Kraus form: $\GC(O) = GOG\ad$ for any operator $O$. 

Supplementary Eqs.~\eqref{eqnb92constraintsA}-\eqref{eqnpostselectb92} define the optimization problem, and our results are shown in Fig.~6. We remark that our formulation of the dual problem has only 4 parameters, whereas the primal problem has 12 parameters, making the latter somewhat more difficult to solve.

\section*{\uppercase{Supplementary Note 5: Arbitrary key-map POVMs}}\label{apppovm}

The Methods section proves our main result for the case where the key-map POVM is a projective measurement. Here we generalize this proof to arbitrary measurements.

First we rewrite the primal problem in terms of a coherence-like quantity, similar to what we did in the Methods section. Consider some POVM $P=\{P_j\}$ with $P_j \geq 0$ for each $j$, and $\sum_j P_j = \id$. Let us first define a generalized notion of coherence as follows,
\begin{align}
\label{eqncoherencegeneralizeddef}
\Phi_G(\rho, P) := D\bigg(\rho  \bigg| \bigg| \sum_j P_j \rho P_j \bigg)\,.
\end{align}
We note that the second argument $\sum_j P_j \rho P_j$ is not necessarily normalized, although in general we have $\Tr(\sum_j P_j \rho P_j )\leq 1$, which follows from $P_j^2 \leq P_j$. In turn, this implies that $\Phi_G$ is non-negative: 
\begin{align}
\label{eqncoherencegeneralizeddef22}
\Phi_G(\rho, P) \geq 0\,.
\end{align}

Suppose Alice's measurement is an arbitrary POVM, $Z_A = \{Z_A^j\}$. Then, from Lemma 4 of Ref.~\cite{Coles2011}, we have:
\begin{align}
\label{eqncoherencegeneralizedrelation}
H(Z_A | E) \geq \Phi_G(\rho_{AB}, Z_A) 
\end{align}
where $E$ can be taken to purify $\rho_{AB}$. Hence we can define (or lower bound) the primal problem as
\begin{align}
\label{eqnpovmprimalproblem4}
\al := \min_{\rho_{AB}\in \CC} \Phi_G(\rho_{AB}, Z_A) \,,
\end{align}
which is analogous to Eq.~$\alphacoherence$.

One can then transform to the dual problem, as described in the Methods section. The only subtlety is that the analog of Eq.~$\coherencesigma$ can be written as an inequality: 
\begin{align}
\label{eqn2356796povm}
\widehat{\Phi}_G(\rho_{AB}, Z_A) &= \widehat{D}\big(\rho_{A B} || \ZC_A (\rho_{A B} ) \big) \\
\label{eqn23532625povm}
&\geq \min_{\sg_{ A B}\in\DC} \widehat{D}\big(\rho_{A B} || \ZC_A ( \sg_{A B} ) \big).
\end{align}
The rest of the derivation proceeds as described in the Methods section.

\section*{\uppercase{Supplementary Note 6: Strong duality}}\label{appstrongdual}

In general the dual problem gives a lower bound on the primal problem, a fact called weak duality \cite{Boyd2010}. In the notation used in the Methods section, this means that
\begin{align}
\label{eqnweakdual}
\widehat{\beta} \leq \widehat{\al} \,.
\end{align}
However, under certain conditions, the dual and primal problems are equivalent, which is called strong duality. Here we show that strong duality holds for our problem. That is, we prove Eq.~$\strongduality$, $\widehat{\beta} = \widehat{\al}$.

Slater's condition for a convex optimization problem is a sufficient criterion that guarantees strong duality \cite{Boyd2010}. For a problem with affine constraints, Slater's condition is satisfied when there is an element of the relative interior of the domain of optimization that satisfies the constraints. Thus, one way to guarantee strong duality for our problem is to show that there is a $\rho_{AB}$ satisfying the constraints in Eq.~$\constraints$ such that $\rho_{AB} > 0$. However, it is easy to imagine examples where Alice's and Bob's constraints specify a 0 eigenvalue for $\rho_{AB}$. For such examples, one would not satisfy Slater's condition, since there would be no $\rho_{AB} > 0$ consistent with the constraints.

We get around this issue as follows. We show that by slightly perturbing the constraints in a physical way, we can arrive at a ``perturbed'' problem which \emph{does} satisfy Slater's condition and thus strong duality. Intuitively, changing the constraints infinitesimally should not change the solution to the primal problem by more than an infinitesimal amount; we prove this formally using the continuity of the coherence (see Supplementary Lemma~\ref{lemcontinuity} below). Thus solving the dual optimization (i.e., solving for $\widehat{\beta}$) can be done for slightly perturbed constraints, giving a solution arbitrarily close, and hence equivalent for all practical purposes, to the solution of the primal problem, $\widehat{\al}$. It is in this sense that we have strong duality in Eq.~$\strongduality$.

(For simplicity, we drop the subscript $AB$ on $\rho_{AB}$ for the remainder of this section, and just write $\rho$. Also, we assume that the $\{Z_A^j\}$ are projectors, and we replace $Z_A = \{Z_A^j\}$ with a generic set of projectors $\Pi = \{\Pi^j\}$.)

More precisely, we consider the following two optimization problems
\begin{align}
&\text{Problem 1:  }a_1 = \min_{\rho \in \SC_1} \Phi(\rho, \Pi)\label{unpert}\\
&\text{Problem 2:  }a_2(\varepsilon ) = \min_{\rho \in \SC_2 (\varepsilon)} \Phi(\rho, \Pi)\label{pert}
\end{align}
where one should recall that the primal problem objective function was identified in Eq.~$\alphacoherence$ as the coherence $\Phi$. Here,
\begin{align}
\SC_1 &:= \{\rho \in \HC_d : \rho \geq 0, \Tr(\rho \vec{\Gamma}) =\gmv \}\notag\\
\SC_2 (\varepsilon) &:= \{\rho \in \HC_d : \rho  > 0, \Tr(\rho \vec{\Gamma}) = (1-d \varepsilon)\gmv+\varepsilon \Tr(\Gmv) \}.\notag
\end{align}
We call Supplementary Eq.~\eqref{unpert} the unperturbed problem and Supplementary Eq.~\eqref{pert} the perturbed problem. In Supplementary Propositions \ref{sdpert} and \ref{pertlimit}, we shall prove that strong duality holds for the perturbed problem and that
\begin{align}
\label{eqnlimit1111}
\lim_{\varepsilon\rightarrow 0^+}a_2(\varepsilon) = a_1.
\end{align}
By choosing $\varepsilon$ sufficiently small, solving the dual perturbed problem will then yield a result that is arbitrarily close to the solution to the primal problem.
\vspace{4pt}

\begin{proposition} The strong duality property holds for the perturbed problem.\label{sdpert}\end{proposition}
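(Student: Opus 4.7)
The plan is to verify Slater's condition, which then immediately yields strong duality by the standard convex optimization theorem. The perturbed problem minimizes $\Phi(\rho, \Pi)$ over $\SC_2(\varepsilon)$, where the objective is convex in $\rho$ (as remarked after Eq.~$\alphacoherence$), and the constraints defining $\SC_2(\varepsilon)$ consist of affine equality constraints together with the conic constraint $\rho \geq 0$. For such a convex program, Slater's condition reduces to exhibiting an operator in the relative interior of the PSD cone, i.e.\ a strictly positive $\rho$, that satisfies the affine equalities.

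The core construction is to interpolate between any feasible point of the unperturbed problem and the maximally mixed state. Specifically, I would pick any $\rho_1 \in \SC_1$ (which is nonempty, since otherwise the primal problem $a_1$ is vacuous) and set
\begin{align*}
\rho_\varepsilon := (1-d\varepsilon)\,\rho_1 + \varepsilon\, \id.
\end{align*}
For $0 < \varepsilon < 1/d$ this operator is strictly positive, since $\rho_\varepsilon \geq \varepsilon\, \id > 0$. Moreover, using linearity of the trace together with the elementary identity $\Tr(\id\,\Gamma_i) = \Tr(\Gamma_i)$, one computes
\begin{align*}
\Tr(\rho_\varepsilon\,\Gamma_i) = (1-d\varepsilon)\,\gamma_i + \varepsilon\,\Tr(\Gamma_i),
\end{align*}
which is exactly the affine equality defining $\SC_2(\varepsilon)$. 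Hence $\rho_\varepsilon \in \SC_2(\varepsilon)$ and $\rho_\varepsilon > 0$, providing a Slater point.

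Having produced a Slater point, the standard convex duality result (see Ch.~5 of \cite{Boyd2010}) gives strong duality for the perturbed problem, as claimed. The real insight is in the choice of perturbation: shifting each target $\gamma_i$ toward its value on the maximally mixed state $\id/d$ is exactly what makes the convex combination $\rho_\varepsilon$ simultaneously feasible and strictly positive. With a different perturbation, producing a Slater point would generally require more work. Beyond this observation, there is no significant technical obstacle — the argument is a direct verification followed by a textbook invocation.
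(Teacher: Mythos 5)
Your proposal is correct and follows essentially the same route as the paper: both verify Slater's condition by mapping a feasible point $\rho_1\in\SC_1$ to $(1-d\varepsilon)\rho_1+\varepsilon\,\id$, which is strictly positive and satisfies the perturbed affine constraints by construction. Your version simply spells out the positivity bound and the trace computation a bit more explicitly than the paper does.
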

\begin{proof}
 We prove this by showing that Slater's condition is satisfied for the perturbed primal problem, which implies strong duality for the perturbed problem \cite{Boyd2010}. Slater's condition for this convex optimization problem with affine constraints is satisfied when there is a $\rho>0$ that satisfies the constraints.

We begin by considering a map from the domain $\SC_1$ to $\SC_2  (\varepsilon)$. Let the map $\MC_{\varepsilon}$ act on a state $\rho$ via
\begin{align}\MC_{\varepsilon}(\rho)= (1-d \varepsilon)\rho + \varepsilon \id.\label{injection}\end{align}
Note that if $\rho \in \SC_1$, then $\MC_{\varepsilon}(\rho) \in \SC_2  (\varepsilon)$.

We take the given set of constraints to be physical, i.e., $\SC_1$ is non-empty. By the map $\MC_\varepsilon$, $\SC_2(\varepsilon)$ is non-empty as well. Then, as $\rho > 0$ for all $\rho\in\SC_2(\varepsilon)$, Slater's condition is satisfied. 
\end{proof}

Before proving the continuity of the perturbation, we note that coherence $\Phi$ has some nice properties. One property that we will make explicit use of is its \textit{continuity} in the state $\rho$, which we prove in the following lemma. After completion of this work, Ref.~\cite{Winter2015} proved the same lemma in their article.
\begin{lemma}
\label{lemcontinuity}
Let $\rho$ and $\sg$ be two density operators on a Hilbert space of dimension $d$. Suppose they are close in trace distance $T(\tau, \tau ' ):= (1/2)\Tr |\tau  -  \tau ' |$, in particular, suppose $T(\rho, \sg ) \leq 1/e$. Then the coherences of $\rho$ and $\sg$ are nearly equal:
\begin{align}
\label{eqncoherencecontinuity}
\Delta \Phi &:= | \Phi(\rho, \Pi) - \Phi(\sg, \Pi) | \notag \\
&\leq  2 [ T(\rho, \sg ) \log_2 d - T(\rho, \sg ) \log_2  T(\rho, \sg ) ].
\end{align} 
\end{lemma}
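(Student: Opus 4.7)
The plan is to reduce the continuity of $\Phi$ to the continuity of the von Neumann entropy and then invoke the Fannes inequality twice. Since $\Pi = \{\Pi^j\}$ is a complete set of orthogonal projectors, the dephased operator $\rho_\Pi := \sum_j \Pi^j \rho \Pi^j$ is block-diagonal with respect to the decomposition, and hence so is $\log_2 \rho_\Pi$. Consequently $\Tr(\rho \log_2 \rho_\Pi) = \Tr(\rho_\Pi \log_2 \rho_\Pi) = -H(\rho_\Pi)$, which I would use to rewrite the coherence as a difference of entropies,
\begin{equation}
\Phi(\rho,\Pi) = D(\rho \,\|\, \rho_\Pi) = H(\rho_\Pi) - H(\rho),
\end{equation}
and similarly for $\sigma$. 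The triangle inequality then yields $\Delta\Phi \leq |H(\rho) - H(\sigma)| + |H(\rho_\Pi) - H(\sigma_\Pi)|$.

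Next I would invoke the standard Fannes inequality: if $\tau$ and $\tau'$ are density operators on a $d$-dimensional Hilbert space with $T(\tau,\tau') \leq 1/e$, then
\begin{equation}
|H(\tau) - H(\tau')| \leq T(\tau,\tau') \log_2 d - T(\tau,\tau') \log_2 T(\tau,\tau').
\end{equation}
With $T := T(\rho,\sigma) \leq 1/e$ this bounds the first term directly. For the second term, the dephasing operation $\rho \mapsto \rho_\Pi$ is a CPTP map, so the data-processing inequality for trace distance gives $T(\rho_\Pi, \sigma_\Pi) \leq T$. Both $\rho_\Pi$ and $\sigma_\Pi$ still live on the same $d$-dimensional Hilbert space, so Fannes applies with the same $d$, yielding $|H(\rho_\Pi) - H(\sigma_\Pi)| \leq f(T(\rho_\Pi, \sigma_\Pi))$ where $f(x) := x\log_2 d - x\log_2 x$.

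To finish, I would observe that $f$ is monotonically increasing on $(0, 1/e]$: its derivative $\log_2(d/x) - 1/\ln 2$ is non-negative whenever $x \leq d/e$, which holds for any $x \leq 1/e$ since $d \geq 1$. Therefore $f(T(\rho_\Pi, \sigma_\Pi)) \leq f(T)$, and adding the two contributions produces the claimed factor of $2$ in the bound. The main obstacle is really just the bookkeeping around the Fannes hypothesis: one must verify that the regime $T \leq 1/e$ both triggers the clean form of Fannes and makes $f$ monotone, so that the data-processing reduction of the trace distance on the dephased states can be exploited without picking up extra terms. Everything else is a one-line application of the triangle inequality.
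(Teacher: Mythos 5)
Your proposal is correct and follows essentially the same route as the paper: rewriting $\Phi(\rho,\Pi)=H(\rho_\Pi)-H(\rho)$ via the block-diagonal structure of $\log\rho_\Pi$, splitting $\Delta\Phi$ by the triangle inequality, applying Fannes to both pairs, and using monotonicity of trace distance under the dephasing channel together with the monotonicity of $x\log_2 d - x\log_2 x$ on $[0,1/e]$ to collect the factor of $2$. No gaps to report.
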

\begin{proof}
The proof uses Fannes' inequality, which states that
\begin{align}
\label{eqnFannes}
| H(\rho ) - H(\sg) | \leq   T(\rho, \sg ) \log_2 d - T(\rho, \sg ) \log_2  T(\rho, \sg ) ,
\end{align} 
which holds so long as $T(\rho, \sg ) \leq 1/e$. Note that, because of the monotonicity of the trace distance under quantum channels, we also have $T(\rho_{\Pi}, \sg_{\Pi} ) \leq 1/e$, where $\rho_{\Pi} = \sum_j \Pi^j \rho \Pi^j$ and $\sg_{\Pi} = \sum_j \Pi^j \sg \Pi^j$. Hence Supplementary Eq.~\eqref{eqnFannes} also holds for the states $\rho_{\Pi}$ and $\sg_{\Pi}$.

Noting that $\log \rho_{\Pi} = \sum_j \Pi^j (\log \rho_{\Pi}) \Pi^j$, we have
$$H(\rho_{\Pi}) = - \Tr(\rho \log \rho_{\Pi}),$$
and hence we can rewrite the coherence as 
$$\Phi(\rho, \Pi) = H(\rho_{\Pi}) - H(\rho).$$
This allows us to bound the coherence difference:
\begin{align}
\label{eqnFannesCohbound}
\Delta \Phi &=   | H(\rho_{\Pi}) - H(\sg_{\Pi}) + H(\sg) - H(\rho) | \\
&\leq | H(\rho_{\Pi}) - H(\sg_{\Pi}) | + | H(\sg) - H(\rho) | \\
&\leq T(\rho_{\Pi} , \sg_{\Pi} ) \log_2 d - T(\rho_{\Pi} , \sg_{\Pi} ) \log_2  T(\rho_{\Pi}, \sg_{\Pi} ) +T(\rho, \sg ) \log_2 d - T(\rho, \sg ) \log_2  T(\rho, \sg ) \\
&\leq 2 [T(\rho, \sg ) \log_2 d - T(\rho, \sg ) \log_2  T(\rho, \sg )],
\end{align} 
where the last line uses $T(\rho_{\Pi}, \sg_{\Pi} )\leq T(\rho , \sg )$, as well as the monotonicity of $(-x\log x)$ over the interval $x\in [0,1/e]$.
\end{proof}

To prove Supplementary Eq.~\eqref{eqnlimit1111}, we first state the following two technical lemmas.
\begin{lemma}
\label{lemcontinuity223}
There exists a $\rho \in\SC_2(\varepsilon)$ such that $\Phi(\rho ,\Pi)$ is within $\mathcal O(\varepsilon)$ of the unperturbed solution, $a_1$.
\end{lemma}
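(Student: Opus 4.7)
The plan is to construct an explicit feasible point in $\SC_2(\varepsilon)$ by perturbing a minimizer of the unperturbed problem, and then control the change in the objective value via the continuity of the coherence (Supplementary Lemma~\ref{lemcontinuity}). First, let $\rho^* \in \SC_1$ be any state attaining the unperturbed minimum, so $\Phi(\rho^*, \Pi) = a_1$. I would then reuse the map $\MC_\varepsilon$ introduced in the proof of Supplementary Proposition~\ref{sdpert}, setting
$$\rho_\varepsilon := \MC_\varepsilon(\rho^*) = (1-d\varepsilon)\rho^* + \varepsilon \id.$$
Direct verification shows that, for $\varepsilon \leq 1/d$: (i) $\rho_\varepsilon$ has unit trace; (ii) its eigenvalues are bounded below by $\varepsilon$, so $\rho_\varepsilon > 0$; and (iii) by linearity of the trace, $\Tr(\rho_\varepsilon \vec{\Gamma}) = (1-d\varepsilon)\vec{\gamma} + \varepsilon \Tr(\vec{\Gamma})$. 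Hence $\rho_\varepsilon \in \SC_2(\varepsilon)$.

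Second, I would estimate the trace distance between $\rho_\varepsilon$ and $\rho^*$. From $\rho_\varepsilon - \rho^* = \varepsilon(\id - d\rho^*)$ and the triangle inequality $\|\id - d\rho^*\|_1 \leq \|\id\|_1 + d\|\rho^*\|_1 = 2d$, we obtain $T(\rho_\varepsilon, \rho^*) \leq \varepsilon d$. For $\varepsilon$ small enough that $\varepsilon d \leq 1/e$, this lies in the regime where Supplementary Lemma~\ref{lemcontinuity} applies, yielding
$$\bigl|\Phi(\rho_\varepsilon, \Pi) - \Phi(\rho^*, \Pi)\bigr| \leq 2\bigl[\,\varepsilon d \log_2 d - \varepsilon d \log_2(\varepsilon d)\,\bigr].$$
Since $\Phi(\rho^*, \Pi) = a_1$, this exhibits a $\rho_\varepsilon \in \SC_2(\varepsilon)$ whose objective value differs from $a_1$ by at most this quantity, which tends to zero as $\varepsilon \to 0^+$, completing the proof.

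There is essentially no deep obstacle; the proof is a standard perturbation-plus-continuity argument. The only minor subtlety is that the bound this method provides is actually $\mathcal{O}(\varepsilon \log(1/\varepsilon))$ rather than strictly $\mathcal{O}(\varepsilon)$, because Supplementary Lemma~\ref{lemcontinuity} is of Fannes type. One should therefore read the $\mathcal{O}(\varepsilon)$ in the statement as shorthand ``up to a logarithmic factor,'' which is harmless for the subsequent limit argument $\lim_{\varepsilon \to 0^+} a_2(\varepsilon) = a_1$ in Supplementary Proposition~\ref{pertlimit}. A separate (and easier) observation is that this construction only shows $a_2(\varepsilon) \leq a_1 + \mathcal{O}(\varepsilon \log(1/\varepsilon))$; the matching lower bound $a_2(\varepsilon) \geq a_1 - o(1)$ used in the next proposition requires pushing a minimizer of the perturbed problem back into $\SC_1$ by a similar continuity argument.
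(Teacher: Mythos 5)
Your proposal is correct and is essentially the paper's own proof: the paper likewise takes a minimizer $\overline{\rho}\in\SC_1$, maps it into $\SC_2(\varepsilon)$ via $\MC_\varepsilon$, bounds the trace distance by $d\varepsilon$, and invokes the continuity lemma to get the bound $-2d\varepsilon\log_2\varepsilon$, which equals your expression. Your side remark is also consistent with the paper, whose bound is likewise $\mathcal{O}(\varepsilon\log(1/\varepsilon))$ and suffices for the limit in Supplementary Proposition~\ref{pertlimit}.
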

\begin{proof}
Let $\overline{\rho} \in \SC_1$ be a positive semidefinite matrix that minimizes the unperturbed problem, i.e.,
\begin{align}
\label{eqnrhobar1}
a_1 = \Phi(\overline{\rho} ,\Pi)\,.
\end{align} 
Consider the corresponding positive definite matrix $\mathcal M_\varepsilon(\overline{\rho} )\in\SC_2(\varepsilon)$, where $\mathcal M_\varepsilon:\SC_1 \rightarrow \SC_2(\varepsilon)$ is the injection defined in Supplementary Eq.~\eqref{injection}. We argue that this injection does not change the objective function much, due to continuity of coherence. Note that the states $\overline{\rho} $ and $\MC_{\varepsilon}(\overline{\rho} )$ are close in trace distance: 
\begin{align}
\label{eqnrhobar2}
T(\overline{\rho} , \MC_{\varepsilon}(\overline{\rho} )) = d \varepsilon T(\overline{\rho}  , \id / d) \leq d \varepsilon \,.
\end{align} 
From Supplementary Eq.~\eqref{eqncoherencecontinuity} we have that
\begin{align}
\label{eqncoherencebijection}
| \Phi( \overline{\rho} , \Pi) - \Phi(\MC_{\varepsilon}(\overline{\rho} ), \Pi) | \leq  -2  d \varepsilon \log_2  \varepsilon \,,
\end{align} 
which proves the statement.
\end{proof}

\begin{lemma}
\label{lemcontinuity222}
There exists a $\rho\in\SC_1$ such that $\Phi(\rho,\Pi)$ is within $\mathcal O(\varepsilon)$ of the perturbed solution, $a_2(\varepsilon)$.
\end{lemma}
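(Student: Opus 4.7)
Plan: I would mirror the proof of Supplementary Lemma~\ref{lemcontinuity223}, working in the reverse direction via the set-theoretic inverse of the injection $\MC_{\varepsilon}$. First, let $\overline{\rho}_{\varepsilon}$ achieve the minimum of $\Phi(\,\cdot\,,\Pi)$ over the compact set $\overline{\SC_{2}(\varepsilon)} := \{\rho\geq 0 : \Tr(\rho\Gmv) = (1-d\varepsilon)\gmv + \varepsilon \Tr(\Gmv)\}$; existence is guaranteed by compactness together with continuity of $\Phi$ (Supplementary Lemma~\ref{lemcontinuity}), and this minimum equals $a_{2}(\varepsilon)$ because $\SC_{2}(\varepsilon)$ is dense in its closure and $\Phi$ is continuous.

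Next, define $\tilde{\rho} := (\overline{\rho}_{\varepsilon} - \varepsilon \id)/(1-d\varepsilon)$, the formal inverse image of $\overline{\rho}_{\varepsilon}$ under $\MC_{\varepsilon}$. A direct calculation shows $\Tr(\tilde{\rho}\,\Gmv) = \gmv$, so $\tilde{\rho}$ lies in the affine constraint surface of $\SC_{1}$; it is Hermitian with trace one, but since $\overline{\rho}_{\varepsilon}$ may have eigenvalues as small as $0$, $\tilde{\rho}$ is only guaranteed to satisfy $\tilde{\rho} \geq -\tfrac{\varepsilon}{1-d\varepsilon}\,\id$, so its negative part has operator norm of order $\varepsilon$. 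To promote $\tilde{\rho}$ to a bona-fide element of $\SC_{1}$, I would fix any $\rho_{0}\in\SC_{1}$ lying in the relative interior of $\SC_{1}$ (strictly positive on its support, with a fixed spectral gap) and form the convex combination $\rho := (1-\lambda)\tilde{\rho} + \lambda \rho_{0}$ with $\lambda = \mathcal{O}(\varepsilon)$ chosen so that the contribution of $\lambda\rho_{0}$ dominates the negative part of $(1-\lambda)\tilde{\rho}$ on the relevant subspace. Since affine combinations preserve linear constraints, $\rho$ remains in the constraint surface of $\SC_{1}$, and the choice of $\lambda$ secures $\rho\geq 0$, hence $\rho\in\SC_{1}$. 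One then has $T(\rho,\overline{\rho}_{\varepsilon}) = \mathcal{O}(\varepsilon)$, and the continuity bound in Supplementary Lemma~\ref{lemcontinuity} yields $|\Phi(\rho,\Pi)-a_{2}(\varepsilon)| = \mathcal{O}(\varepsilon\log(1/\varepsilon))$, matching the rate of Supplementary Lemma~\ref{lemcontinuity223}.

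The main obstacle is the relative-interior step: if every element of $\SC_{1}$ is rank-deficient and supported on a common proper subspace $W\subset\HC$, the mixing must be carried out inside $W$, which in turn requires checking that $\tilde{\rho}$ itself has only an $\mathcal{O}(\varepsilon)$ component outside $W$ (a consequence of the constraints that pin all of $\SC_{1}$ to $W$, together with the near-positivity of $\tilde{\rho}$). A cleaner route that sidesteps this issue entirely, and that is all the subsequent Supplementary Proposition~\ref{pertlimit} actually needs, is a compactness argument: for any sequence $\varepsilon_{n}\to 0^{+}$, the minimizers $\overline{\rho}_{\varepsilon_{n}}$ admit a subsequential limit $\rho^{*}$ in the compact set of density operators; closedness of the PSD cone gives $\rho^{*}\geq 0$, and continuity of the (affine) trace constraints, whose right-hand sides converge to $\gmv$, gives $\rho^{*}\in\SC_{1}$. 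Continuity of $\Phi$ then yields $\Phi(\rho^{*},\Pi) = \lim\Phi(\overline{\rho}_{\varepsilon_{n}},\Pi) = \lim a_{2}(\varepsilon_{n})$, which realizes the claim in the $o(1)$ sense sufficient to conclude $\lim_{\varepsilon\to 0^{+}}a_{2}(\varepsilon)=a_{1}$.
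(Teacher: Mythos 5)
Your Route A breaks down at exactly the point you flag, and the failure is real rather than a routine check. Take $d=2$ with constraints $\ave{\id}=1$ and $\ave{\dya{1}}=0$, so that $\SC_1=\{\dya{0}\}$ and $W=\mathrm{span}\{\ket{0}\}$. The perturbed feasible set contains the pure state $\ket{\psi}=\sqrt{1-\varepsilon}\,\ket{0}+\sqrt{\varepsilon}\,\ket{1}$ (its $\dya{1}$-expectation is exactly $\varepsilon$); if the minimizer $\overline{\rho}_{\varepsilon}$ is of this type, then $\tilde{\rho}=(\dya{\psi}-\varepsilon\id)/(1-2\varepsilon)$ has vanishing diagonal on $W^{\perp}$ but a $W$--$W^{\perp}$ coherence of order $\sqrt{\varepsilon}$, hence a negative direction with a large $W^{\perp}$ component; no convex combination $(1-\lambda)\tilde{\rho}+\lambda\rho_{0}$ with $\rho_{0}$ supported on $W$ is positive semidefinite for any $\lambda<1$ (the relevant $2\times 2$ determinant is $-(1-\lambda)^{2}|\bramatket{0}{\tilde{\rho}}{1}|^{2}<0$). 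More generally, near-positivity $\tilde{\rho}\geq-\mathcal{O}(\varepsilon)\id$ controls the off-diagonal blocks between $W$ and $W^{\perp}$ only at the $\mathcal{O}(\sqrt{\varepsilon})$ level, so your asserted ``$\mathcal{O}(\varepsilon)$ component outside $W$'' is both unproven and, as stated, false, and even a small component outside $W$ cannot be repaired by mixing with states supported inside $W$. The missing ingredient is precisely a quantitative stability statement for the feasible set under perturbation of the constraint values, and that is the nontrivial content of the paper's proof: it parametrizes states in an orthonormal operator basis adapted to the constraints, regards positivity as an infinite family of half-space constraints, invokes a stability result for such linear-inequality systems to pair the perturbed minimizer $\rholh_2$ with an operator $\rholh_1\in\SC_1$ at trace distance $\mathcal{O}(\varepsilon)$, and only then applies Supplementary Lemma~\ref{lemcontinuity}.

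Your Route B is essentially sound (the set $\{\rho\geq 0:\Tr(\rho\Gmv)=(1-d\varepsilon)\gmv+\varepsilon\Tr(\Gmv)\}$ is indeed the closure of $\SC_2(\varepsilon)$ because $\MC_{\varepsilon}(\SC_1)\subseteq\SC_2(\varepsilon)$ is nonempty, so the infimum is attained there and equals $a_2(\varepsilon)$; subsequential limits of the minimizers are positive semidefinite, satisfy the unperturbed constraints, and the values transfer by continuity of $\Phi$), but it proves strictly less than the lemma: it yields $\liminf_{\varepsilon\to 0^{+}}a_{2}(\varepsilon)\geq a_{1}$ with no rate, i.e., an $o(1)$ statement, rather than the claimed $\mathcal{O}(\varepsilon)$ closeness for each fixed $\varepsilon$. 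That weaker statement does suffice for Supplementary Proposition~\ref{pertlimit} and hence for the strong-duality conclusion, so it is a legitimate, more elementary alternative for the paper's purposes; but it is not a proof of the lemma as stated, and it gives up the quantitative control over how small the perturbation must be chosen, which the paper's hyperplane-stability argument is designed to provide.
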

\begin{proof}As in Supplementary Note 2, we consider the trace constraints rephrased (via the Gram-Schmidt process) in terms of an orthonormal set of operators $\{\Delta_i\}$, i.e.,
\begin{align}
\label{eqnonbasisoperators1}
\text{Tr}(\rho\vec\Gamma) = \vec\gamma \quad \to \quad \text{Tr}(\rho\vec\Delta) = \vec\delta\,,
\end{align}
and we extend the set to an orthonormal basis of Hermitian operators:
\begin{align}
\label{eqnonbasisoperators2}
\{\Delta_i\}\cup\{\Xi_j\}\,.
\end{align}

For $\rho_1\in\SC_1$ and $\rho_2\in\SC_2(\varepsilon)$, the most general expressions for these states are:
\begin{align}
\label{eqnrho1gen}\rho_1 &= \vec\delta\cdot\vec\Delta + \vec \xi_1 \cdot \vec \Xi\\
\label{eqnrho2gen}\rho_2 &= (\vec\delta + \vec p)\cdot\vec\Delta + \vec \xi_2\cdot \vec \Xi,\end{align}
where $\vec\xi_1$ is any vector that enforces the positive semidefiniteness of $\rho_1$ and $\vec\xi_2$ is any vector that enforces the positive definiteness of $\rho_2$. In Supplementary Eq.~\eqref{eqnrho2gen}, we used the fact that the constraint perturbation from the unperturbed problem in Supplementary Eq.~\eqref{unpert} to the perturbed problem in Supplementary Eq.~\eqref{pert} takes the form $\vec \delta \rightarrow \vec \delta + \vec p$, with $\vec p := \varepsilon\vec p_0$, and
\begin{align}
\vec p_0 := \text{Tr}(\vec\Delta) - d\hspace{1.6 pt}\vec\delta.
\end{align}

In what follows, we study the geometry of the set of possible $\vec \xi_1$ and $\vec \xi_2$, thereby finding the most general form for $\rho_1$ and $\rho_2$.

For all states $\ket i$ in the Hilbert space $\mathcal H$, the positive (semi)definiteness requirements are equivalent to the following statements, for all $\ket i$
\begin{align}\label{conon1}\bra{i}\rho_1\ket i &= \vec\delta \cdot \bra i \vec\Delta\ket i + \vec\xi_1 \cdot \bra i \vec\Xi\ket i \ge 0 \\
\bra{i}\rho_2\ket i &= (\vec\delta + \vec p) \cdot \bra i \vec\Delta\ket i + \vec\xi_2 \cdot \bra i \vec\Xi\ket i > 0. \label{conon2}\end{align}

Define the infinite-dimensional operators $C$ and $D$ with $(C)_{ij} := \bra i (\vec\Xi)_j\ket i$ and $(D)_{ij} := \bra i (\vec \Delta)_j \ket i$. Then the $\vec \xi$ which enforce the positive (semi)definiteness requirements are those which satisfy the following infinite vector of inequalities
\begin{align} C\cdot \vec\xi_1 &\ge - D\cdot \vec\delta = \vec s  \\
 C\cdot  \vec\xi_2 &> -D\cdot \vec \delta - \varepsilon D \cdot \vec p_0 = \vec s + \varepsilon \vec t \,, \end{align}
where the right sides are fixed quantities, so we defined $\vec s := - D \cdot \vec \delta$ and $\vec t :=- D\cdot \vec p_0$.

For each $\ket i$, the solution space of $\vec\xi_1$ and $\vec \xi_2$ satisfying the inequality constraint is a hyperplane defined by
\begin{align}&C_{i1}\big(\vec{\xi}_1\big)_1 + C_{i2}\big(\vec{\xi}_1 \big)_2 + \cdots \ge s_i\\
&C_{i1}\big(\vec{\xi}_2 \big)_1 + C_{i2}\big(\vec{\xi}_2 \big)_2 + \cdots > s_i + \varepsilon t_i.\end{align}
The regions of all possible $\vec \xi_1$ and $\vec \xi_2$ are the intersections of all of the constrained regions, each bounded by a hyperplane. Since $\SC_1$ and $\SC_2(\varepsilon)$ are non-empty, the solution sets for $\vec\xi_1$ and $\vec\xi_2$ are non-empty. Perturbing the constraints is equivalent to shifting each hyperplane slightly (by $\varepsilon t_i$ for the $i$th hyperplane). For sufficiently small $\varepsilon$, each $\vec\xi_2$ in the perturbed solution set is within $\mathcal O(\varepsilon)$ in Euclidean distance of a $\vec\xi_1$ in the unperturbed solution set \cite{Hu1989}.

Now, let $\rholh_2 \in \SC_2(\varepsilon)$ be a positive definite matrix that minimizes the perturbed problem, i.e., 
\begin{align}
\label{eqnrho1hat0a}
a_2(\varepsilon) = \Phi(\rholh_2 ,\Pi)\,.
\end{align}
In terms of the basis in Supplementary Eq.~\eqref{eqnonbasisoperators2},
\begin{align}
\label{eqnrho2hat}
\rholh_2 = (\vec \delta + \varepsilon\vec p_0)\cdot\vec\Delta + \vec{\hat{\xi}}_2 \cdot\vec\Xi \,,
\end{align}
for a vector $\vec{\hat{\xi}}_2$ that enforces positive definiteness for $\rholh_2$.

By the previous discussion, there exist a $\vec{\hat{\xi}}_1$ which enforces positive semidefiniteness for a matrix satisfying the unperturbed trace constraints that is within $\mathcal O(\varepsilon)$ of $\vec{\hat{\xi}}_2$. That is, the vector
\begin{align}
\label{eqnrho1hat0}
\vec{v} : = \vec{\hat{\xi}}_2 - \vec{\hat{\xi}}_1
\end{align}
has a norm $| \vec{v} |$ that is $\mathcal O(\varepsilon)$, and hence each component of $\vec{v}$ is no larger in magnitude than $\mathcal O(\varepsilon)$. We define
\begin{align}
\label{eqnrho1hat}
\rholh_1 := \vec\delta \cdot \vec \Delta + \vec{\hat{\xi}}_1\cdot\vec \Xi\in\SC_1.
\end{align}
The difference between Supplementary Eqs.~\eqref{eqnrho2hat} and \eqref{eqnrho1hat}
\begin{align}
\label{eqnrhohatdiff}
\rholh_2 - \rholh_1 = \varepsilon \vec p_0 \cdot \vec \Delta + \vec{v} \cdot\vec \Xi
\end{align}
has trace norm $\mathcal O(\varepsilon)$, as it is a sum of Hermitian operators with coefficients that are $\mathcal O(\varepsilon)$. Hence we have constructed an element $\rholh_1\in\SC_1$ that is within $\mathcal O(\varepsilon)$ in trace distance of $\rholh_2$. Then by continuity of coherence in Supplementary Lemma \ref{lemcontinuity}, the objective function value $\Phi(\rholh_1,\Pi)$ is within $\mathcal O(\varepsilon)$ of $a_2(\varepsilon)$.
\end{proof}

\begin{proposition} \label{pertlimit}For $a_1,a_2(\varepsilon)$ defined in Supplementary Eqs.~\eqref{unpert} and \eqref{pert},
\begin{align}\lim_{\varepsilon\rightarrow 0^+}a_2(\varepsilon) = a_1.
\end{align} \end{proposition}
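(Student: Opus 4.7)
The plan is to prove $\lim_{\varepsilon \to 0^+} a_2(\varepsilon) = a_1$ by a sandwich argument: I will establish both
\[
a_2(\varepsilon) \leq a_1 + \mathcal{O}(\varepsilon |\log \varepsilon|) \quad \text{and} \quad a_2(\varepsilon) \geq a_1 - \mathcal{O}(\varepsilon |\log \varepsilon|),
\]
so that the two-sided bound on $|a_2(\varepsilon) - a_1|$ vanishes as $\varepsilon \to 0^+$. The two inequalities follow from Supplementary Lemmas~\ref{lemcontinuity223} and~\ref{lemcontinuity222}, respectively, together with the fact that $a_1$ and $a_2(\varepsilon)$ are defined as minima over their respective feasible sets.

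For the upper bound on $a_2(\varepsilon)$, I would invoke Supplementary Lemma~\ref{lemcontinuity223}: it produces a specific feasible element $\MC_\varepsilon(\overline{\rho}) \in \SC_2(\varepsilon)$, constructed from a minimizer $\overline{\rho}$ of the unperturbed problem, whose objective value differs from $a_1$ by at most $-2d\varepsilon \log_2 \varepsilon$. Since $a_2(\varepsilon)$ minimizes the objective over all of $\SC_2(\varepsilon)$, we immediately get $a_2(\varepsilon) \leq \Phi(\MC_\varepsilon(\overline{\rho}), \Pi) \leq a_1 + \mathcal{O}(\varepsilon |\log \varepsilon|)$.

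For the lower bound on $a_2(\varepsilon)$, I would invoke Supplementary Lemma~\ref{lemcontinuity222}: it produces a feasible element $\rholh_1 \in \SC_1$ whose objective value is within $\mathcal{O}(\varepsilon |\log \varepsilon|)$ of $a_2(\varepsilon)$ (the $\varepsilon$-closeness coming from the geometry of perturbed hyperplane constraints, and the logarithmic factor from the continuity of coherence established in Supplementary Lemma~\ref{lemcontinuity}). Since $a_1$ is the minimum over $\SC_1$, we get $a_1 \leq \Phi(\rholh_1, \Pi) \leq a_2(\varepsilon) + \mathcal{O}(\varepsilon |\log \varepsilon|)$, yielding the desired lower bound on $a_2(\varepsilon)$.

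Combining the two bounds gives $|a_2(\varepsilon) - a_1| \leq \mathcal{O}(\varepsilon |\log \varepsilon|)$, so taking $\varepsilon \to 0^+$ proves the claim. I do not expect any real obstacle here, since the two preceding lemmas have already done the heavy lifting of constructing nearby feasible points and of quantifying the continuity of $\Phi$; the remaining task is purely the sandwich inference. The only point that would warrant care is to track that the $\mathcal{O}(\varepsilon |\log \varepsilon|)$ terms on both sides arise from the same continuity estimate (Supplementary Lemma~\ref{lemcontinuity}) applied to states close in trace distance, and that the implicit constants depend only on $d$, so that the bound is indeed uniform in a way that makes the limit rigorous.
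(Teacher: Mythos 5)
Your proposal is correct and follows essentially the same route as the paper: the paper's own proof is exactly this sandwich argument, deducing $\lim_{\varepsilon\to 0^+} a_2(\varepsilon) \leq a_1$ from Supplementary Lemma~\ref{lemcontinuity223} and $\lim_{\varepsilon\to 0^+} a_2(\varepsilon) \geq a_1$ from Supplementary Lemma~\ref{lemcontinuity222}. You simply spell out the minimality inferences and the $\mathcal{O}(\varepsilon|\log\varepsilon|)$ bookkeeping more explicitly than the paper does, which is fine.
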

\begin{proof}
Supplementary Lemma~\ref{lemcontinuity223} implies that
\begin{align}
\lim_{\varepsilon\rightarrow 0^+}a_2(\varepsilon) \leq a_1\,.
\end{align}
Supplementary Lemma~\ref{lemcontinuity222} implies that
\begin{align}
\lim_{\varepsilon\rightarrow 0^+}a_2(\varepsilon) \geq a_1\,.
\end{align}
Combining these two facts gives the desired result.
\end{proof}
\vspace{3pt}

\end{document}